\theoremstyle{lemma} 
\newtheorem{lemma}{Lemma}
\theoremstyle{remark} 
\newtheorem{remark}{Remark}
\theoremstyle{proposition} 
\theoremstyle{definition} 
\newtheorem{definition}{Definition}
\begin{document}
\linespread{1.5}
\renewcommand{\baselinestretch}{1.2}
  \fontsize{13}{16}\selectfont

\title{Spin current in BF theory}
\author         {Malik  Almatwi}

\date{Department of Mathematical Science, Ritsumeikan University, 6-1-1 Matsugaoka, Otsu, Shiga 520-2102, Japan 
 \\ 
email: {malik.matwi@gmail.com}}
\maketitle
\tableofcontents

\section*{Abstract}
{In this paper}, we introduce a current which we call spin current corresponding to the variation of the matter action in BF theory with respect to the spin connection $A$ which takes values in Lie algebra $\mathfrak{so}(3,\mathbb{C})$ in self-dual formalism. For keeping the constraint $DB^i=0$ satisfied, we suggest adding a new term to the BF Lagrangian using a new field $\psi^i$ which can be used for calculating the spin current. We derive the equations of motion and discuss the solutions. We will see that the solutions of that equations do not require a specific metric on the manifold $M$, we just need to know the symmetry of the system and the information about the spin current. Finally we find the solutions in a spherical and cylindrical symmetric systems.
\\

Keywords: {BF} theory; local Lorentz symmetry; local Lorentz currents.

\section{Introduction}
The BF theory on 4-manifold $M$ is a topological theory which when includes constraints terms turns to gravity theory. The fundamental variables are 2-form $B  \in \Omega^2(M; \mathfrak{so}(3,1))$ and spin connection $\omega$ which takes values in $\mathfrak{so}(3,1)$, all derivatives are linear and applied only on $\omega$, which makes it easy for canonical formalism; finding the phase space, Hamiltonian equations, quantization,... \cite{Laurent}. This theory does not require a metric to be formulated, the metric is a derived quantity from the solutions of $B$. That gives motivation to formulate Einstein’s gravity as a theory of 2-forms rather than the metric tensors and so no pre-exiting geometrical structure is needed to obtain the gravity. Let $F(\omega) \in \Omega^2(M; \mathfrak{so}(3,1))$ be the curvature of $\omega$. The pure BF theory action is $ \int\limits_M tr({B  \wedge F (\omega)})$ which is invariant(symmetric) under local Lorentz transformation(regarded as gauge group) and under arbitrary diffeomorphisms of $M$ and does not need using a metric. The equations of motion are $F(\omega)=0$ and $d_\omega B=0$, thus $B$ defines a twisted de Rham cohomology class $[B] \in H^2_{DR}(M, \mathfrak{so}(3,1))$, and the solution of $F(\omega)=0$ is unique up to gauge and diffeomorphism transformations. There are no local degrees of freedom because the system has so much symmetry, that all solutions are locally equivalent under gauge transformation of the group $SO(3,1)$ and under diffeomorphisms of $M$. Hence the pure BF theory is a topological theory \cite{John, Aberto}. \\

For example, let $f: M \to M$ be an infinitesimal diffeomorphism generated by a vector field $v$, then $B \mapsto B+ d_\omega (i_v(B))$ for $d_ \omega B=0$. Therefore $d_\omega B \mapsto d_\omega B+ d^2_\omega (i_v(B))$, but $F(\omega)=0$ implies $d^2_\omega=0$, thus the equation of motion $d_\omega B=0$ maps to the equation of motion $d_{\omega} B=0$, therefore all equations of motion are equivalent under diffeomorphisms of $M$. But when $F(\omega)\ne 0$, then $d^2_\omega\ne0$, therefore the equation of motion $d_\omega B=0$ maps to the equation of motion $d_{\omega} B \ne0$, so there are local degrees of freedom. That relates to the fact that $d_\omega$ acts locally when $F(\omega)\ne 0$, in other words $f^*(d_\omega)\ne d_\omega$ when $F(\omega)\ne 0$, but when $F(\omega)= 0$, then $f^*(d_\omega)= d_\omega$. In QFT, using fields(not forms), the action changes(under $f$) as $\delta S=\int \partial_\mu(g_{\nu \rho}\delta x^\rho \delta S/\delta g_{\mu \nu} )$ when the equation of motion are satisfied. Therefore in order to get TQFT, it must be $\delta S/\delta g_{\mu \nu}=0$, so independent of any metric(standard or other).
\\

In constrained BF, the Lagrangian includes the constraint term $\varphi_{IJKL} B^{IJ} \wedge B^{KL}$. The traceless matrix $\varphi$ plays the role of a Lagrangian multiplier and that imposes the constraint on the 2-form $B^{IJ}$, so that its solutions are given in terms of 1-forms $e^I=e^I_\mu dx^\mu$, that is $B^{IJ}=e^I \wedge e^J$, where $I, J,...=0,1,2,3$ are Lorentz indices and $\mu, \nu,...=0,1,2,3$ are spacetime tangent indices, we regard the frame fields $e^I_\mu dx^\mu$ as gravitational fields, therefore the constrained BF theory turns to general relativity theory, the reason is that when $\varphi_{IJKL}$ is not zero, the term  $\varphi_{IJKL} B^{IJ} \wedge B^{KL}$ breaks the diffeomorphisms invariance of BF action, thus there are non-equivalent local solutions and so local degrees of freedom exist as known in general relativity in the vacuum. The problem with constrained BF theory is that the equation of motion $\delta S/ \delta B=0$ contains the non-physical variable $\varphi_{IJKL}$, but we can remove it by taking the trace of the equations($\varphi$ is traceless matrix), but also there is a problem with the trace operation, it reduces the equations to one equation which is not enough for getting a solution. For that reason we search for solutions of BF theory by using the equation $\delta S/ \delta \omega=0$(as done in this paper). In general, the equations of motion of constrained BF theory including matter give a relation between the curvature $F^{IJ}(\omega)$ and the frame fields $\Sigma^{IJ}=e^I \wedge e^J$(the Plebanski 2-form), in matrix notation, that is $F=\chi \Sigma+ \xi \bar\Sigma $, where $\chi$ and $\xi$ are symmetric matrices of scalar fields \cite{Mariano}. Therefore the problem turns to finding $\chi$ and $\xi$. Since the field $\varphi_{IJKL}$ is not a physical variable, the equations of motion of general relativity have not to include it(appendix A).
\\

In this paper, we start with definition of the spin current $J$ and discuss its conservation in BF theory including matter(for a general we do not specify a matter Lagrangian). The spin current $J$ appears in the equations of motion as a source for $d_\omega B$ by the equation $*(d_\omega B)+J=0$, where '$*$' is Hodge star operator. And in order to get $d_\omega B=0$ in this paper, we add a new term to BF Lagrangian, like $tr\left(\psi B \wedge F(\omega)\right)$, using a new field $\psi$. That can be seen as a redefinition $B \to B+\psi B$, by which the equation $*(d_\omega B)+J=0$ becomes $*(d_\omega B)+*(d_\omega(\psi B))+J=0$, so we choose $d_\omega B=0$ and get $*((d_\omega\psi) B)+J=0$. Therefore the spin current becomes a source for the field $\psi$ instead of $B$ and we get a new formula(definition) for the spin current using $\psi$, and since the spin current regards symmetry of the system, the field $\psi$ also regards that symmetry. In constraint BF theory, solving the equation $*((d_\omega\psi) B)+J=0$ is easy as we will see. We find that the equation of motion of $\psi$ is same conservation equation $D_\mu J^\mu=0$ of the spin current vector field $J$. We see that we can solve the equations of BF theory only by solving the spin current equation $\delta S/ \delta \omega=0$, $J \ne 0$ with $d_\omega B=0$ and without needing solving the equation $\delta S/ \delta B=0$ which includes the Lagrangian multiplier $\varphi_{IJKL}$(a non-physical variable), and without using a gravitational metric on $M$, we just need using the spin current and knowing the symmetry of the system. That means that we can solve the BF equations only by using the coupling term $ \int\limits_M \omega^{IJ}_\mu J^\mu_{IJ}$ which makes them easy to solve, and makes the theory similar to the gauge theory. And since $\omega^{IJ}$ is 1-form and $J_{IJ}$ is vector field, the term $ \omega^{IJ}_\mu J^\mu_{IJ}$ is naturally defined on $M$ without needing using additional structure(like a metric), thus solving the system equation using only that coupling term gives a topological theory, i.e, the theory turns to finding 1-forms and vector fields, and does not need to use a gravitational metric, similarly to Chern-Simons theory which includes the Wilson loops as a source for the gauge field. That makes it easy to solve the equations in different cases of the spin current, e.g, point charge, straight line current, circular current,... . The lines of the spin current can be described using any coordinates system, e.g, Euclidean coordinates,..., so the BF theory can be studied in any coordinates system, but in order to avoid an effect of the coordinates on the lines of spin currents, we let that coordinates be flat(not curved). And since the spin current is source for the field $\psi$, this field has singularities on the lines of that spin current. We see that our solution of $B^{IJ}$ can be always written as $e^I \wedge e^J$, so we get the gravity theory. Finally we give an example of explicit solution of the equations in spherical and cylindrical symmetric systems in static case just by finding the field $\psi$ using the spin charges $J^0_{IJ}$.

\section{Spin current in BF theory}
Let $M$ be connected oriented smooth 4-manifold and $P\to M$ be an $SO(3,1)$-principal bundle(Appendix A) with a spin connection $\omega$ which is locally a 1-form with values in $\mathfrak{so}(3,1)$ and $F \in \Omega^2(M; \mathfrak{so}(3,1)_P)$ is its curvature. The BF theory action is invariant under global and local Lorentz transformation, that gives a conserved current, we call it spin current (Appendix C). Before discussion the conservation of the spin current, we introduce the self-dual formalism.

\begin{definition}
The self-dual projection is a homomorphism 
\[
\mathfrak{so}(3,1)_P  = P \times_{SO(3,1)} \mathfrak{so}(3,1) \to  \mathfrak{so}(3,\mathbb{C})_P  =P \times_{SO(3,1)} \mathfrak{so}(3,\mathbb{C})
\]
defined by
\[
\left( {P_{IJ}^i } \right):\mathfrak{so}(3,1){\hookrightarrow}\mathfrak{so}(3,1)_\mathbb{C}  = \mathfrak{so}(3,\mathbb{C}) \oplus \mathfrak{so}(3,\mathbb{C}) \to \mathfrak{so}(3,\mathbb{C}),
\]
with using the matrices \cite{Carl}
\begin{equation}\label{eq:selfd}
P^i_{IJ}=\frac{1}{2}\varepsilon ^i{_{jk}}, \text{ for } I=i, J=j, \text{ and } P^i_{0j}=-P^i_{j0}=-\frac{i}{2}\delta^i_j, \text{ for } I=0, J=j \ne 0.
\end{equation}
We used $P \times _{SO(3,1)} \mathfrak{so}(3,1) = {{\left( {P \times \mathfrak{so}(3,1)} \right)} \mathord{\left/
 {\vphantom {{\left( {P \times \mathfrak{so}(3,1)} \right)} {SO(3,1)}}} \right.
 \kern-\nulldelimiterspace} {SO(3,1)}}$, which is locally isomorphic to $U_\alpha \times  \mathfrak{so}(3,1)$ for open sets $\{U_\alpha \}$ of $M$.
\\

That self-dual projection relates to the fact that the complexified Lie algebra of $SO(3,1)$ has the decomposition $\mathfrak{so} (3,1)_{\mathbb{C}}=\mathfrak{so}(3,{\mathbb{C}}) \oplus \mathfrak{so}(3, {\mathbb{C}})$ \cite{Herfray}. The new connection is locally an $\mathfrak{so}(3,\mathbb{C})$-valued 1-form $A$ on $M$ whose components are
\begin{equation}\label{eq:62}
A^{i}_\mu=P^i_{IJ}\omega^{IJ}_\mu=\frac{1}{2} \varepsilon{^i}_{jk} \omega^{jk}_\mu   -i\omega^{0i}_\mu, \quad i =1,2,3,
\end{equation}
and its curvature is
\begin{equation}\label{eq:63}
 F^i(A)=P^i_{IJ}F^{IJ}(\omega)=dA^i+\varepsilon{^i}_{jk} A^j\wedge A^k.
\end{equation}
The two form $B^{IJ}$ is mapped to $B^{i}=P^i_{IJ}B^{IJ}$. The covariant derivative $D_\mu   = \nabla _\mu   +\omega _\mu ^{IJ}$ acting on sections of $TM \otimes \mathfrak{so}(3,1,\mathbb{C})_P$ becomes $D_\mu  = \nabla _\mu   +A _\mu$, with $A _\mu^{ij}=\varepsilon{^{ij}}_{k} A^k_\mu$, where $\nabla _\mu$ is the affine connection on $TM$. 
\\

Using the new variables we can write the Lagrangian of matter(without specifying matter fields) $L_{matter}(e^I, \omega^{IJ})$ as $L_{matter}(B^{i}, A^{i}, \bar B^i, \bar A^i)$, where $\bar B^i$ and $\bar{ A}^i$(anti-selfdual representation) are the complex conjugation of $B^i$ and $A^i$. The Urbantke formula (equation \ref{eq:ab2}, appendix B) writes the metric $g_{\mu \nu}$ using only the constrained $B^{i}$ without using the constrained $\bar B^{i}$. And the self-dual connection ${ A}^i$ is compatible with $B^{i}$ via $d_{A} B^{i}=0$, while the anti-self-dual connection ${\bar A}^i$ is compatible with $\bar B^{i}$ via $d_{\bar A} \bar B^{i}=0$. By that we may suppose 
\begin{equation}\label{eq:ab1}
\frac{\delta }{{\delta \bar{B}^i }}L_{matter}  = 0,\quad \frac{\delta }{{\delta \bar{A}^i }}L_{matter}  = 0,
\end{equation}
or just writing $L_{matter}(B^{i}, A^{i})$.
\\

\end{definition}

\begin{definition}
Let $A$ be be the self-dual connection on the $\mathfrak{so}(3, \mathbb{C})$-bundle $\mathfrak{so}(3, \mathbb{C})_P \to M $. Let $L_{matter}$ be the Lagrangian of matter fields on $M$. Then the spin current $J_{ i}^\mu$ is defined to be
\[
J_{ i}^\mu=   \frac{\delta }{{\delta A_\mu ^i }}L_{matter}.
\]
\end{definition}
The matter action $S_{matter}$ is required to be invariant under any infinitesimal local Lorentz transformation $\omega_\mu ^{ IJ}\mapsto \omega_\mu ^{ IJ} + D_\mu  \Lambda^{ IJ}$ for infinitesimal transformation parameter $\Lambda^{ IJ} \in \Omega^0(M; \mathfrak{so}(3,1)_P)$. Now we assume $S_{matter}$ has this property. Then we have the following.
\begin{lemma}
The spin current $J_{ i}^\mu$ given by $J_i^\mu   = \frac{\delta }{{\delta A_\mu ^i }}L_{matter} $ in gravity theory is conserved \cite{Michael}.
\end{lemma}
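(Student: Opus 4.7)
The plan is to run a standard Noether argument for the local Lorentz gauge symmetry, with the twist that we work with the self-dual connection $A$ rather than $\omega$. First I would expand the infinitesimal local Lorentz transformation $\delta\omega_\mu^{IJ}=D_\mu\Lambda^{IJ}$ via the projector $P^i_{IJ}$ of \eqref{eq:selfd} to get the induced transformation of $A$: set $\lambda^i=P^i_{IJ}\Lambda^{IJ}$, and verify from \eqref{eq:62} that
\[
\delta A^i_\mu = D_\mu\lambda^i = \partial_\mu\lambda^i + \varepsilon^i{}_{jk}A^j_\mu\lambda^k,
\]
so that the self-dual connection also transforms as a Yang--Mills connection (now for $\mathfrak{so}(3,\mathbb C)$). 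At the same time, any matter field $\phi$ will transform by the corresponding infinitesimal gauge action $\delta_\lambda\phi$.

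Next I would write the total variation of $S_{matter}$ as
\[
\delta S_{matter}=\int_M d^4x\left(\frac{\delta L_{matter}}{\delta\phi}\,\delta_\lambda\phi+\frac{\delta L_{matter}}{\delta A^i_\mu}\,D_\mu\lambda^i\right)=\int_M d^4x\left(\frac{\delta L_{matter}}{\delta\phi}\,\delta_\lambda\phi+J^\mu_i\,D_\mu\lambda^i\right),
\]
invoking the hypothesis \eqref{eq:ab1} to discard the contributions from $\bar A$ and $\bar B$. Imposing the matter equations of motion $\delta L_{matter}/\delta\phi=0$ kills the first term. Assumed local Lorentz invariance then tells us $\delta S_{matter}=0$ for every compactly supported $\lambda^i$, which leaves
\[
\int_M d^4x\, J^\mu_i D_\mu\lambda^i = 0.
\]
Integrating by parts (the boundary term dropping because $\lambda^i$ is compactly supported) converts this to $\int_M \lambda^i D_\mu J^\mu_i\, d^4x = 0$, and since $\lambda^i$ is arbitrary I conclude $D_\mu J^\mu_i=0$, i.e.\ the spin current is covariantly conserved with respect to the self-dual connection.

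The routine bookkeeping is the derivation of $\delta A^i_\mu = D_\mu\lambda^i$ from the self-dual projection, which is direct but needs the identity $P^i_{IJ}(\omega^I{}_K\Lambda^{KJ}-\omega^J{}_K\Lambda^{KI})=\varepsilon^i{}_{jk}A^j\lambda^k$ that follows from the isomorphism $\mathfrak{so}(3,1)_{\mathbb C}\simeq\mathfrak{so}(3,\mathbb C)\oplus\mathfrak{so}(3,\mathbb C)$. The main obstacle I anticipate is the \emph{on-shell} character of the argument: the conservation is clean only after using the matter equations of motion $\delta L_{matter}/\delta\phi=0$, so one should state explicitly that "conserved" is meant on solutions of the matter sector; off-shell one obtains only the weak identity $D_\mu J^\mu_i=-(\delta L_{matter}/\delta\phi)\,T_\phi{}^i$ where $T_\phi{}^i$ encodes the gauge action on the matter. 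This caveat, plus a check that $J^\mu_i$ transforms in the adjoint of $\mathfrak{so}(3,\mathbb C)$ so that $D_\mu J^\mu_i=\partial_\mu J^\mu_i+\varepsilon_i{}^{jk}A_\mu^j J_k^\mu$ is the correct covariant divergence, is all that is needed to complete the proof.
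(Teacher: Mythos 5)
Your proposal is correct and follows essentially the same route as the paper: invariance of $S_{matter}$ under $A^i_\mu \mapsto A^i_\mu + D_\mu\Lambda^i$, integration by parts, and arbitrariness of the gauge parameter (vanishing on $\partial M$) yield $D_\mu J^\mu_i = 0$. The only substantive difference is that you make explicit the on-shell use of the matter equations of motion to eliminate the $\delta L_{matter}/\delta\phi$ term, a step the paper leaves implicit by writing the variation of $S_{matter}$ as coming from $A^i$ alone.
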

\begin{proof}
Since $S_{matter}$ is invariant under infinitesimal gauge transformation $\Lambda^{IJ}$, it is invariant under (we may suggest the condition (\ref{eq:ab1}))
\[
A_\mu ^{ i}\mapsto A_\mu ^{ i} + D_\mu  \Lambda^{i} \quad for \quad \Lambda^{i}=P^i_{IJ} \Lambda^{ IJ}  \in \Omega^0(M; \mathfrak{so}(3, \mathbb{C})_P). 
\]
The variation
\begin{equation*}
\begin{split}
S_{matter} &\left( {A^i  + D\Lambda ^i } \right) - S_{matter} \left( {A^i } \right) = \int\limits_M d^4 x{\left( {D_\mu  \Lambda ^i } \right)\frac{\delta }{{\delta A_\mu ^i }}} L_{matter}  \\ 
 & =  - \int\limits_M d^4 x {\Lambda ^i D_\mu  \left( {\frac{\delta }{{\delta A_\mu ^i }}L_{matter} } \right)} + \int\limits_{\partial M} {d^3 x^\mu \Lambda ^i \frac{\delta }{{\delta A_\mu ^i }}L_{matter}   } 
\end{split}
\end{equation*}
vanishes for arbitrary $\Lambda ^i$ only when $D_\mu \left( \frac{\delta }{{\delta A_\mu ^i }}L_{matter}\right) =0$, where we let $\Lambda ^i$ vanish on the boundary $\partial M$. Thus the current $J_i^\mu = \frac{\delta }{{\delta A_\mu ^i }} L_{matter}$ is conserved. Actually the previous calculation based on the idea that ${ A}^i$ and ${\bar A}^i$ transform independently under infinitesimal local Lorentz transformation $\omega_\mu ^{ IJ}\mapsto \omega_\mu ^{ IJ} + D_\mu  \Lambda^{ IJ}$, therefore there is another current that associates with the connection ${\bar A}^i$ when the matter Lagrangian depends also on ${\bar A}^i$.
\end{proof}
Same thing we find for the general relativity action, by using the variables $(\Sigma^i, A^i)$, we obtain the equation
\[
\int\limits_M d^4 x{\left( {D_\mu  \Lambda ^i } \right)\frac{\delta }{{\delta A_\mu ^i }}} S_{GR}=0 \Rightarrow  -\int\limits_M d^4 x{ \Lambda ^i D_\mu \frac{\delta }{{\delta A_\mu ^i }}} S_{GR}=0.
\]
In $3+1$ decomposition of the space-time manifold $M=\Sigma \times \mathbb{R}$, let $\Sigma_t$ be space-like slice of constant time $t$, with the coordinates $(x^a)_{a=1,2,3}$, let $0$ be time index. In Hamilton–Jacobi system, by using the variables $(E_i^a, A_a^i, A_0^i)$ on the slice of constant time $\Sigma_t$, that equation becomes 
\begin{equation}\label{eq:a15}
\begin{split}
&-\int\limits_{\Sigma \times \mathbb{R}} d^4 x \Lambda ^i D_a \left( \frac{\delta }{{\delta A_a ^i }}S_{GR}\right)- \int\limits_{\Sigma \times \mathbb{R}} d^4 x \Lambda ^i D_0 \left( \frac{\delta }{{\delta A_0 ^i }}S_{GR}\right) \\
&=-constant \times \int\limits_{\Sigma \times \mathbb{R}} d^4 x \Lambda ^i \left(D_a E^{a}_i +  D_0(D_a E^{a}_i)\right)=0, 
\end{split}
\end{equation}
which is satisfied when $D_a E^{a}_i=0$, where $E^{a}_i$ is conjugate momentum to $ A_a^i$, and we used the relations $E^{a}_i=constant \times\frac{\delta }{{\delta A_a ^i }}S_{GR}$ and $D_a E^{a}_i=constant \times\frac{\delta }{{\delta A_0 ^i }}S_{GR}$  \cite{Carl, Gen, Carlo}. The equation $D_a E^{a}_i=0$ is satisfied in BF theory(appendix A).
\\

\begin{remark}
We note that the current $J_i^\mu$ is similar to the currents in Yang-Mills theory of the gauge fields, we see this clearly when we regard the connection $A_\mu ^i$ as a gauge field, by that the current $J_i^\mu$ relates to the local Lorentz invariance (local symmetry). The metric $g_{\mu \nu} = e^I_\mu e^J_\nu \eta_{IJ}$ is invariant under arbitrary local Lorentz transformations, like $e^I_\mu(x) \mapsto U^I{_J}(x)e^J_\mu(x)$, for $U(x) \in SO(3,1)$, therefore the local Lorentz symmetry is an internal degree of freedom.
\\
\end{remark}

\begin{definition}
The action of BF theory including matter(without cosmological constant) on $SO(3,1)$-principal bundle $P \to M$ is defined to be \cite{Lee}
\[
 S = S_{topological}  + S_{constraints} + S_{matter},
\]
with
\[
S_{topological}  = \int\limits_M {B_i  \wedge F^i(A) }, \quad\text{and}\quad S_{constraints}  =\frac{1}{2} \int\limits_M {\varphi _{ij} B^i  \wedge B^j } ,
\]
where $\varphi\in \Gamma\left(M; \rm{End}(\mathfrak{so}(3, \mathbb{C})_P)\right)$ is traceless matrix of scalar fields $\varphi_{ij}$, actually we do not require it to be symmetric since we will add a new term to BF Lagrangian(see the discussion below the equation (\ref{eq:a2})). The connection $A$ on the Lie algebra bundle $\mathfrak{so}(3,\mathbb{C})_P$ which is locally a 1-form with values in $\mathfrak{so}(3,\mathbb{C})$ and its curvature $F(A)\in\Omega^2\left(M; \mathfrak{so}(3, \mathbb{C})_P\right)$ are defined in the equations (\ref{eq:62}) and (\ref{eq:63}). The index contraction is done by using $\delta_{ij}$, the Killing form on $\mathfrak{so}(3, \mathbb{C})$.
\end{definition}
Hence
\begin{equation}\label{eq:51}
S = \int\limits_M {\left( {B_i  \wedge F^i (A) + \frac{1}{2}\varphi_{ij} B^i  \wedge B^j } \right)} + S_{matter} .  
\end{equation}
Since the matrix $\varphi$ is traceless, we can write $\varphi_{ij}=m_{ij}- (m_{11}+m_{22}+m_{33})\delta_{ij}/3$, for some not traceless matrix $(m_{ij})$. The variation of the action with respect to $m_{ij}$ produces a quadratic equation in $B^i$ whose solution turns the theory into general relativity. These are
\[
B^i  \wedge B^j  = \frac{1}{3}\delta ^{ij} B_i  \wedge B^j .
\]
The solutions to this are all of the following form $B^i  = P^i _{IJ} e^I  \wedge e^J $, in which the gravitational fields $e^I_\mu$ are considered as frame fields \cite{Puzio}. Using the self-dual formula (\ref{eq:selfd}), the constrained 2-form $B^i $ is written as 
\[
B^i  = \frac{1}{2}\varepsilon ^i{_{jk}} e^j  \wedge e^j-i e^0  \wedge e^i=\Sigma^i,
\]
this is $\left. {B^i } \right|_{constrained}  = \Sigma ^i $, with using the notation $\Sigma ^i  = P^i _{IJ} e^I  \wedge e^J$.
\\

The equation of motion with respect to $B^i $ is
\[
F^i (A) + \varphi ^i {_j} B^j  + \frac{{\delta S_{matter} }}{{\delta B_i }} = 0,
\]
or
\begin{equation}\label{eq:81}
F^i (A)  =- \varphi ^i {_j } B^j  - \frac{{\delta S_{matter} }}{{\delta B_i }}.
\end{equation}
Since $F^i (A) \in\Omega^2\left(M; \mathfrak{so}(3, \mathbb{C})_P\right) $ is 2-form with values in $\mathfrak{so}(3, \mathbb{C})$, the $ \frac{{\delta  }}{{\delta B_i }}S_{matter} $ is also 2-form with values in $\mathfrak{so}(3, \mathbb{C})$.

\begin{lemma}
In constrained $B^i $, the variation $ \frac{{\delta  }}{{\delta B_i }}S_{matter} \in\Omega^2\left(M; \mathfrak{so}(3, \mathbb{C})_P\right) $ has the form
\[
\left. {\frac{{\delta S_{matter} }}{{\delta B_i }} } \right|_{constraint}    =T ^i {_j} \Sigma^j +\xi^i {_j} \bar\Sigma^j,
\]
for some matrices $T ^i {_j},  \xi^i {_j} \in \Gamma\left(M; \rm{End}(\mathfrak{so}(3, \mathbb{C})_P)\right)$, with $T ^i {_j}=T  {_j} ^i$ and $\xi^i {_j} =\bar \xi {_j} ^i$  (See appendix B, for more details).
\end{lemma}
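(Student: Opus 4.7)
My approach rests on the classical decomposition of complex-valued 2-forms on a 4-manifold into self-dual and anti-self-dual sub-bundles of rank three each. On the constrained locus $B^i = \Sigma^i = P^i_{IJ}\, e^I \wedge e^J$, the Urbantke formula (\ref{eq:ab2}) produces exactly the metric whose Hodge star fixes $\Sigma^i$ and negates $\bar\Sigma^i$; consequently $\{\Sigma^j, \bar\Sigma^j\}_{j=1,2,3}$ furnishes a basis of $\Lambda^2 T^*_x M \otimes \mathbb{C}$ at each point.

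First, I would verify the basis claim: linear independence of $\Sigma^1,\Sigma^2,\Sigma^3$ follows from non-degeneracy of the tetrad $e^I$, and self-duality $*\Sigma^i = \Sigma^i$ is a direct consequence of the defining formula (\ref{eq:selfd}) for $P^i_{IJ}$; the anti-self-dual case is parallel. Because $\delta S_{matter}/\delta B_i$ is, for each internal index $i$, an ordinary $\mathbb{C}$-valued 2-form on $M$, the pointwise basis expansion then yields
\[
\left.\frac{\delta S_{matter}}{\delta B_i}\right|_{constraint} = T^i{}_j \,\Sigma^j + \xi^i{}_j \,\bar\Sigma^j,
\]
with coefficients $T^i{}_j, \xi^i{}_j \in \Gamma\bigl(M; \operatorname{End}(\mathfrak{so}(3,\mathbb{C})_P)\bigr)$.

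For the symmetries, I would test against arbitrary infinitesimal perturbations $\delta B^i = c^{ij}\Sigma_j + d^{ij}\bar\Sigma_j$ and compute $\delta L_{matter} = \delta B^i \wedge \bigl(\delta L_{matter}/\delta B^i\bigr)$, using the Plebanski-type identity $\Sigma^j \wedge \Sigma^k \propto \delta^{jk}\,\Sigma^m \wedge \Sigma_m$ valid on the constrained solution, its anti-self-dual analogue, and the orthogonality $\Sigma^j \wedge \bar\Sigma^k = 0$ (which follows from $\Sigma^j$ self-dual and $\bar\Sigma^k$ anti-self-dual, so that their pairing via $*$ vanishes). Graded commutativity of the wedge product of 2-forms forces only the symmetric part of $T^{ij}$ to couple to physical fluctuations, so one can (and must) take $T^i{}_j = T_j{}^i$; reality of $L_{matter}$, under which complex conjugation exchanges $\Sigma^i \leftrightarrow \bar\Sigma^i$ and transposes the internal index structure on the coefficient of $\bar\Sigma$, then yields the Hermiticity condition $\xi^i{}_j = \bar\xi_j{}^i$.

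The main obstacle is the symmetry part of the statement. The decomposition itself is immediate linear algebra once the self-dual/anti-self-dual splitting is in place, but deriving the symmetry of $T$ and Hermiticity of $\xi$ without committing to a specific matter Lagrangian requires a careful argument about which components of $T$ and $\xi$ are physically distinguishable modulo the wedge-symmetry and reality of the action. This is where the work concentrates, and the detailed bookkeeping is relegated to appendix B.
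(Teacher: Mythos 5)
Your basis argument for the \emph{existence} of the decomposition is fine (and is implicitly what the paper uses too): since $\{\Sigma^j,\bar\Sigma^j\}$ spans the six-dimensional space of complexified 2-forms at each point, any $\mathfrak{so}(3,\mathbb{C})$-valued 2-form can be expanded as $T^i{}_j\Sigma^j+\xi^i{}_j\bar\Sigma^j$. But that part is essentially trivial; the content of the lemma is the symmetry $T^i{}_j=T_j{}^i$ and the Hermiticity $\xi^i{}_j=\bar\xi_j{}^i$, and your proposed mechanism for these does not work. Pairing against an arbitrary perturbation $\delta B^i=c^{ij}\Sigma_j+d^{ij}\bar\Sigma_j$ gives $\delta S\propto\int c^{ij}T_{ij}\,\Sigma^m\wedge\Sigma_m$ because $\Sigma_j\wedge\Sigma^k\propto\delta_j^k$ is \emph{already} symmetric while $c^{ij}$ is unrestricted; no component of $T_{ij}$ drops out, so graded commutativity of the wedge product imposes no constraint whatsoever on the internal-index matrix. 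Moreover the coefficients of a basis expansion are unique, so there is no freedom to ``take'' $T$ symmetric: either it is or it is not, and nothing in your argument decides this. The Hermiticity claim via ``reality of $L_{matter}$'' is similarly unsubstantiated, especially since the paper explicitly assumes $\delta S_{matter}/\delta\bar B^i=0$, so conjugation does not simply exchange the two coefficient matrices.

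The idea you are missing is the one Appendix B is built on: on the constraint surface the matter action depends on $B^i$ only through the Urbantke metric, so by the chain rule
\[
\left.\frac{\delta S_m}{\delta B^\ell_{\sigma\rho}}\right|_{constraint}=T^{\mu\nu}\,\frac{\sqrt{-g}}{-2}\,\frac{\delta g_{\mu\nu}}{\delta\Sigma^\ell_{\sigma\rho}},
\]
with $T^{\mu\nu}$ the (symmetric, real) energy--momentum tensor and $\sqrt{-g}\,g_{\mu\nu}$ given by the cubic Urbantke formula (\ref{eq:ab2}). The paper then carries out the explicit variation and reduces everything with the projector identities (\ref{eq:79}) and $P^i_{IJ}\varepsilon^{IJKL}=-2iP^{iKL}$, landing on $T_{ij}=-\tfrac12 P_{iIK}P_{jJ}{}^K\,T^{IJ}+(\tfrac{i}{6}-\tfrac14)T\delta_{ij}$ (which further collapses to $(\tfrac{i}{6}-\tfrac38)T\delta_{ij}$, hence symmetric, in fact pure trace) and $\xi_{ij}=\tfrac32 P_{iIK}\bar P_{jJ}{}^K\,T^{IJ}$, whose Hermiticity follows directly from the symmetry and reality of $T^{IJ}$. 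Without routing the variation through the metric, the symmetry statements simply are not forced by the structures you invoke.
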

Therefore in the vacuum we set $T^i{_j}=0$. Using this formula in the equation (\ref{eq:81}) implies 
\begin{equation*}
F^i (A)  =- \varphi ^i {_j } \Sigma^j  -T ^i {_j} \Sigma^j-\xi^i {_j} \bar\Sigma^j,
\end{equation*}
or 
\begin{equation}\label{eq:41}
F^i (A)  =\psi ^i {_j } \Sigma^j -\xi^i {_j} \bar\Sigma^j,
\end{equation}
for some matrix $\psi=-\varphi- T \in \Gamma\left(M; \rm{End}(\mathfrak{so}(3, \mathbb{C})_P)\right)$  \cite{Ingemar}. 
\\

Since $tr(\varphi)=0$, so $tr(\psi)=-tr(T)$, the equation (\ref{eq:41}) yields
\begin{equation}\label{eq:z4}
\Sigma_i^{\mu\nu} F^i_{\mu\nu}  =- tr(T),\quad \text{ for }\quad \Sigma^i_{\mu\nu} \Sigma_j^{\mu\nu}=\delta^i_j ,\text{ and } \Sigma^i_{\mu\nu} \overline\Sigma_j^{\mu\nu}=0.
\end{equation}
Thus in the vacuum $T^i {_j }=0$, we have $\Sigma_i^{\mu\nu} F^i_{\mu\nu}  =0$. We call $\Sigma_i^{\mu\nu} F^i_{\mu\nu}  $ the $tr F $. The equation (\ref{eq:z4}) does not contain the non-physical variable $\varphi$, but the problem with it is that the trace process decreases the number of equations. Therefore $\Sigma_i^{\mu\nu} F^i_{\mu\nu}  =- tr(T)$ is a condition on the solutions. The (0,2) tensor $\Sigma_j^{\mu\nu}$ is inverse of the 2-form $\Sigma^i_{\mu\nu}$ (Appendix C).
\\

The equation of motion with respect to the connection $A^i $ is 
\[
DB^i   + \frac{\delta }{{\delta A^i }}S_{matter}  = 0,
\]
or
\begin{equation}\label{eq:640}
\varepsilon ^{\mu \nu \rho \sigma } D_\nu  B_{\rho \sigma }^i +J^{\mu i } = 0.
\end{equation}

We see that we can not choose $\varepsilon ^{\mu \nu \rho \sigma } D_\nu  B_{\rho \sigma }^i =0$ when $J^{\mu i } \ne 0$. But the condition $\varepsilon ^{\mu \nu \rho \sigma } D_\nu  B_{\rho \sigma }^i = 0$ leads to the equations $D_a E^{ai}=0$(appendix A) and $De^I=0$, which makes the spin connection $\omega^{IJ}$ compatible with the gravitational field $e^I$, that makes the equations of motion easy to solve as we will see. We can get $DB^i=0$ by a redefinition of $B^i$, like $B^i \to B^i+\varepsilon ^i{_{jk}}  \psi^j B^k$, so the equation (\ref{eq:640}) yields 
\begin{equation}\label{eq:z6}
\varepsilon ^{\mu \nu \rho \sigma } \varepsilon ^i{_{jk}}  (D_\nu  \psi^j )B^k_{\rho \sigma } + J^{\mu i } = 0, \quad DB^i=0.
\end{equation}
We get this redefinition by adding a new term to the BF action (\ref{eq:51}), as done below. We see that the equations of motion of general relativity are still satisfied in BF theory(appendix A) after adding the new term, therefore there is no problem with it. In constraint BF theory, solving the equation (\ref{eq:z6}) is easy, because the field $\psi^i$ satisfies the equation $D^2 \psi^i=0$ (equation (\ref{eq:z73})) with respect to a metric that satisfies $\nabla_\rho g_{\mu \nu}=0$.
\\

By acting by $D_\mu$ on the equation (\ref{eq:640}), we get
\begin{equation*}
\varepsilon ^{\mu \nu \rho \sigma }D_\mu D_\nu  B_{\rho \sigma }^i +D_\mu J^{\mu i } = 0,
\end{equation*}
and using $D_\mu J^{\mu i } = 0$, we obtain
\begin{equation*}
\varepsilon ^{\mu \nu \rho \sigma }D_\mu D_\nu  B_{\rho \sigma }^i =\varepsilon ^{\mu \nu \rho \sigma }[D_\mu , D_\nu ] B_{\rho \sigma }^i/2= 0,
\end{equation*}
but $[D_\mu , D_\nu ]=F_{\mu \nu}(A)$, therefore
\begin{equation*}
\varepsilon ^{\mu \nu \rho \sigma }(F_{\mu \nu}(A))^i{_j} B_{\rho \sigma }^j= 0.
\end{equation*}
Then using $(F_{\mu \nu} (A))_{ij}=\varepsilon_{ijk} F^k_{\mu \nu} (A)$, implies
\begin{equation}\label{eq:641}
\varepsilon ^{\mu \nu \rho \sigma }\varepsilon_{ijk} F^j_{\mu \nu}(A) B_{\rho \sigma }^k= 0.
\end{equation}
We can regard the equation (\ref{eq:641}) as an equation of motion with respect to a new field $\psi ^i$, with the possibility of choosing $D B^i=0$ with $J^{\mu i } \ne 0$.
\\

In order to include the equations $DB^i=0$, $J^{\mu i } \ne 0$ and $D_\mu J^{\mu i } = 0$ in BF theory, we suggest the following action.
\begin{definition}
We add a new term to the BF action (\ref{eq:51}) to get
\begin{equation}\label{eq:45}
S = \int\limits_M {\left( {B_i  \wedge F^i (A) + \frac{1}{2}\varphi _{ij} B^i  \wedge B^j } \right)}  + \int\limits_M {\varepsilon_{ijk} \psi ^i B^j  \wedge F^k (A)}    + S_{matter} ,
\end{equation}
in which we have added $\int\limits_M {\varepsilon _{ijk} \psi ^i B^j  \wedge F^k (A)}$, for some vector field $\psi ^i  \in \Gamma\left(M; \mathfrak{so}(3, \mathbb{C})_P\right)$. We can get the new term by a redefinition $B^i \to B^i+\varepsilon^i{_{jk}}\psi^j B^k$ in pure BF Lagrangian $B_i  \wedge F^i (A) $. Also we can get it by a redefinition of $\varphi _{ij}$ as ${\varphi '}_{ij} B^i  = \varphi _{ij} B^i  + \varepsilon _{ijk} \psi ^i F^k(A)$, and by using $F^i  (A)= \chi ^i {_j} B^j  + \xi^i {_j} \bar B^j $ with $B^i  \wedge \bar B^j  = 0$, we obtain ${\varphi '}_{ij}  = \varphi _{ij}  + \varepsilon _{\ell ik} \psi ^\ell  \chi^k {_j}$. There is no problem with the redefinition of $\varphi _{ij}$ since they are just Lagrangian multipliers.
\end{definition}
The equation of motion of this action with respect to the field $\psi ^i$ is
\begin{equation}\label{eq:ab4}
\varepsilon_{ijk} \varepsilon ^{\mu \nu \rho \sigma } B^j_{\mu \nu}  F^k_{\rho \sigma} (A)=0,
\end{equation}
which is same equation (\ref{eq:641}), therefore the field $\psi ^i$ does not change the equations of motion. By using $(F_{\mu \nu} (A))_{ij}=\varepsilon_{ijk} F^k_{\mu \nu} (A)$, we get 
\[
 \varepsilon ^{\mu \nu \rho \sigma }  B^j_{\mu \nu} (F_{\rho \sigma} (A))^i{_j}=0,
\]
but $[D_\mu , D_\nu]=F_{\mu \nu} (A)$. Therefore
\[
 \varepsilon ^{\mu \nu \rho \sigma } [D_\rho , D_\sigma] B^i_{\mu \nu} =0,
\]
but we choose $ \varepsilon ^{\mu \nu \rho \sigma } D_\rho B^i_{\mu \nu} =0$$(DB^i=0)$ as we suggested before, thus 
\[
\frac{{\delta  }}{{\delta \psi ^i }}S_{matter}= 0
\]
is satisfied. Also the equation (\ref{eq:ab4}) is satisfied in constraint BF theory by letting the matrix $\chi _{ij}$ in $F^i  (A)= \chi ^i {_j} B^j  + \xi^i {_j} \bar B^j $ be symmetric. 
\\

The equation of motion of this action with respect to the connection $A^k$ is
\begin{equation}\label{eq:56}
DB^k  + D\left( {\varepsilon ^k {_{ij}} \psi ^i B^j } \right) + \frac{\delta }{{\delta A^k }}S_{matter}  = 0,
\end{equation}
or
\begin{equation*}
\varepsilon ^{\mu \nu \rho \sigma } D_\nu  B_{\rho \sigma }^k  + \varepsilon ^{\mu \nu \rho \sigma } D_\nu \left( {\varepsilon ^k {_{ij}} \psi ^i B_{\rho \sigma }^j } \right) + \frac{\delta }{{\delta A_\mu ^k }}S_{matter}  = 0.
\end{equation*}
We choose $DB^i=0$, which implies $D e^I=0$ in constrained BF theory, since $\left. { B^i } \right|_{constraint}   = \Sigma ^i= P^i _{IJ} e^I  \wedge e^J$. Therefore $\varepsilon ^{\mu \nu \rho \sigma } D_\nu  B_{\rho \sigma }^k  = 0$, which is equivalent to $D_\mu\left( e\Sigma^{\mu\nu i}\right)=0$ in constrained $B^i$, because of the self-duality \cite{Felix, Bennett}
\begin{equation}\label{eq:a9}
\frac{1}{{2!}}e^{-1} \varepsilon^{\mu \nu \rho \sigma } \Sigma _{\rho \sigma }^i    =\left( {*\Sigma ^i } \right)^{\mu \nu }  = \left( { - i\Sigma ^i } \right)^{\mu \nu }  =  - i\Sigma ^{\mu \nu i }, \quad e=\det(e_\mu^I),
\end{equation}
where we used the Hodge duality theory between the forms and the tensor fields, here $\Sigma^i_{\mu\nu}$ is 2-form and $\Sigma^{i\mu\nu}$ is (0,2) tensor field. The field $\Sigma^{\mu\nu}_i$ is inverse of $\Sigma^i_{\mu\nu}$, that is $\Sigma^{\mu\nu}_j \Sigma^i_{\mu\nu}=\delta^i_j$ (see the appendix C for more details). By re-scaling $\Sigma ^{\mu \nu i }$ by $e=\det(e_\mu^I)$, we write 
\begin{equation}\label{eq:z40}
D_\mu \Sigma^{\mu\nu i}=0.
\end{equation}
The remaining equation of (\ref{eq:56}) in constrained $B^i$ is 
\[
D\left( {\varepsilon ^k {_{ij}} \psi ^i \Sigma^j } \right) + \frac{\delta }{{\delta A^k }}S_{matter}  = 0,
\]
or
\[
  \varepsilon ^{\mu \nu \rho \sigma } \varepsilon ^k {_{ij}} \left( {D_\nu \psi ^i  } \right) \Sigma_{\rho \sigma }^j+ \frac{\delta }{{\delta A_\mu ^k }}S_{matter}  = 0,
\]
hence
\begin{equation}\label{eq:48}
-2i  \varepsilon ^k {_{ij}} \left( D_\nu \psi ^i\right)  \Sigma ^{\mu \nu j } + J^{\mu k }= 0,
\end{equation}
in which we used the spin current $J^\mu_ i =\delta S_{matter}/ \delta A_\mu^i$ and the condition $D_\nu \Sigma ^{\mu \nu j } =0$, also re-scaled $\Sigma ^{\mu \nu j }$ by $e=\det(e_\mu^I)$. Later we will discuss $D_\mu J^{\mu k }= 0$. Here both $\Sigma ^{\mu \nu j }$ and $J^{\mu k }$ are tensor fields. The equation (\ref{eq:48}) is same equation obtained in (\ref{eq:z6}).
\\

\begin{remark}
We note that the equation (\ref{eq:48}) is similar to the current $J^{\mu n}  = \left( {\partial ^\mu  \varphi ^i } \right)T_{ij}^n \varphi ^j $ in scalar field theory with symmetry and generators $T_{ij}^n$, we have $J^{0 n}  = \left( {\partial ^0  \varphi ^i } \right)T_{ij}^n \varphi ^j =\pi^i T_{ij}^n \varphi ^j $, where $\pi^i$ is conjugate momentum to $\varphi_i$. Like that, the equation (\ref{eq:48}) gives $ J^{0 i}=2i (D_\mu \psi_k) \varepsilon {^{ki}}_j \Sigma^{\mu0 j}$(for $\mu=a=1,2,3$), so here $D_a \psi_i$ is conjugate momentum to $\Sigma^{ a0 i}$ with noting that the indices raising in $\Sigma^{\mu\nu j}$ is done by using a metric $g_{\mu\nu}$. 
\\
\end{remark}

The equation of motion of the action (\ref{eq:45}) with respect to $B^i$ in constrained BF (like deriving the equation (\ref{eq:41})) is
\begin{equation}\label{eq:540}
F^i (A) + \varphi ^i {_j} \Sigma ^j  +\varepsilon ^i {_{jk}} \psi ^j F^k (A) + T^i {_j} \Sigma ^j  +\xi^i {_j} \bar\Sigma^j= 0.
\end{equation}
Multiplying with $\Sigma _i^{\mu \nu } $, summing over the indices and using $\Sigma _i^{\mu \nu }\bar \Sigma _{\mu \nu }^j  =0$, we obtain
\begin{equation}\label{eq:500}
\Sigma _i^{\mu \nu } F_{\mu \nu }^i  + \varphi ^i {_j} \Sigma _i^{\mu \nu } \Sigma _{\mu \nu }^j  +\varepsilon ^i {_{jk}} \psi ^j \Sigma _i^{\mu \nu } F_{\mu \nu }^k  + T^i {_j} \Sigma _i^{\mu \nu } \Sigma _{\mu \nu }^j  = 0,
\end{equation}
then using $\Sigma _i^{\mu \nu } \Sigma _{\mu \nu }^j  = \delta _i^j $ to obtain
\[
\Sigma _i^{\mu \nu } F_{\mu \nu }^i  + tr(\varphi ) +\varepsilon ^i {_{jk}} \psi ^j \Sigma _i^{\mu \nu } F_{\mu \nu }^k  + tr(T) = 0.
\]
Since $ tr(\varphi )=0$ and $\varepsilon ^i {_{jk}}  \Sigma _i^{\mu \nu } F_{\mu \nu }^k=0$(by the equations (\ref{eq:ab4}) and (\ref{eq:a9})), we obtain 
\begin{equation}\label{eq:49}
\Sigma _i^{\mu \nu } F_{\mu \nu }^i  + tr(T) =tr(F)   + tr(T) = 0.
\end{equation}
The equation (\ref{eq:540}) leads us to write $F^i (A)$ in terms of $\Sigma ^i$ and $\bar\Sigma ^i$, and since $\varepsilon ^i {_{jk}}  \Sigma _i^{\mu \nu } F_{\mu \nu }^k=0$, we can write
\begin{equation}\label{eq:50}
F^i  (A)= \chi^i {_j} \Sigma^j +{\chi'}^i {_j} \bar\Sigma^j, 
\end{equation}
for some symmetric matrix $(\chi ^{ij})$ and skew-hermitian matrix $({\chi'} ^{ij})$. Using this equation in the equation (\ref{eq:49}), we obtain
\begin{equation}\label{eq:55}
tr(\chi ^i {_j})  + tr(T^i {_j}) = 0.
\end{equation}
In addition to this relation, there is another relation between the vector field $\psi ^i$ and the symmetric matrix $\chi ^i{_j}$ when $T^i {_j} \ne 0$ and $J^{\mu i} \ne 0$, from the conservation of the current (\ref{eq:48}), $ D_\nu  J^{\nu i}  =0$, we have (for $D_\mu \Sigma ^{\mu \nu i} =0$)
\begin{equation*}
\begin{split}
\frac{i}{2}   D_\nu  J^{\nu i}  &= \left( {D_\nu  D_\mu  \psi _k } \right)\varepsilon {^{ki}} _j \Sigma ^{\mu \nu j}  = \frac{1}{2}\left(\left[ {D_\nu  ,D_\mu  } \right]\psi _k \right)\varepsilon {^{ki}} _j \Sigma ^{\mu \nu j}    \\ 
 & =  - \frac{1}{2}F_{\mu \nu }^\ell (A) \varepsilon _{\ell k} {^m} \psi _m \varepsilon {^{ki} }_j \Sigma ^{\mu \nu j}=  - \frac{1}{2}F_{\mu \nu }^\ell  (A) \varepsilon _{\ell km} \psi ^m \varepsilon ^{kij} \Sigma _j^{\mu \nu }     \\ 
&= \frac{1}{2}F_{\mu \nu }^\ell  (A) \varepsilon _{k\ell m} \psi ^m \varepsilon ^{kij} \Sigma _j^{\mu \nu } = \frac{1}{2}F_{\mu \nu }^\ell  (A) \psi ^m \left( {\delta _\ell ^i \delta _m^j  - \delta _m^i \delta _\ell ^j } \right)\Sigma _j^{\mu \nu }  \\
&= \frac{1}{2}F_{\mu \nu }^i  (A)\psi ^j \Sigma _j^{\mu \nu }  - \frac{1}{2}F_{\mu \nu }^j  (A) \psi ^i \Sigma _j^{\mu \nu },  
\end{split}
\end{equation*}
and using the equation (\ref{eq:50}), we get
\begin{equation}\label{eq:67}
\begin{split}
 \frac{i}{2}  D_\nu  J^{\nu i} &= \frac{1}{2}\chi ^i {_m} \Sigma _{\mu \nu }^m \psi ^j \Sigma _j^{\mu \nu }  - \frac{1}{2}\chi ^j {_m }\Sigma _{\mu \nu }^m \psi ^i \Sigma _j^{\mu \nu }  = \frac{1}{2}\chi ^i {_m} \psi ^j \delta _j^m  - \frac{1}{2}\chi ^j {_m} \psi ^i \delta _j^m  \\ 
 & = \frac{1}{2}\chi ^i {_j} \psi ^j  - \frac{1}{2}\psi ^i tr\left( {\chi ^i {_j }} \right) = 0 , \quad \text{ for }J^{\mu i} \ne 0.
\end{split}
\end{equation}
This is another relation between the vector field $\psi ^i$ and the symmetric matrix $\chi ^i{_j}$ in existence of matter $T^i {_j} \ne 0$ with $J^{\mu i} \ne 0$. In this case, the matrix $\chi ^i{_j}$ has to satisfy $\det\left(\chi ^i{_j}-tr( {\chi ^i {_j }}) \right)=0$ in order to get $\psi ^i \ne 0$, of course we do not need this condition in the vacuum $T^i {_j} = 0$, $J^{\mu i}=0$. 
\\

Using the equation (\ref{eq:50}) in (\ref{eq:540}), we obtain
\begin{equation*}
\chi^i {_j} \Sigma^j  + \varphi ^i {_j} \Sigma ^j  +\varepsilon ^i {_{jk}} \psi ^j \chi^k {_\ell} \Sigma^\ell  + T^i {_j} \Sigma ^j  +(...)^i {_j} \bar\Sigma^j= 0.
\end{equation*}
That yields
\begin{equation}\label{eq:a2}
\chi^{i\ell} + \varphi ^{i\ell}  +\varepsilon ^i {_{jk}} \psi ^j \chi^{k\ell}  + T^{i\ell} = 0.
\end{equation}
This equation relates to the equation of motion $\delta S/ \delta B=0$, it includes the Lagrangian multiplier $\varphi_{ij}$ which is a non-physical variable which increasing the arbitrary solutions of (\ref{eq:a2}), so increasing the local degrees of freedom. Therefore we need to find $\chi$ and $\psi$ using the other equations of motion we got before. We see that we do not require the traceless matrix $\varphi$ to be symmetric, since the third term in (\ref{eq:a2}) is not symmetric in general. The symmetric matrix $T^{ij}$ is assumed to be given using the matter Lagrangian (Appendix B), thus the total unknown variables are $3+5+8=16$ of the vector $\psi$, the symmetric matrix $\chi$(with (\ref{eq:55})) and the traceless matrix $\varphi$. The formula (\ref{eq:a2}) gives $9$ equations, therefore we have $16-9=7$ unknown variables, but when $J^{\mu i}\ne0$, they reduce to $6$ unknown variables(regarding the equation (\ref{eq:67})). But if we choose a solution for which the symmetric matrix $\chi^{ij}$ becomes diagonal, like
\begin{equation}\label{eq:a26}
\chi=\left( K^i  \delta^i_j\right)=diag(K^1, K^2, K^3), 
\end{equation}
for some scalar functions $K^1, K^2$ and $K^3$ on $M$, the unknown variables reduce to $4$ variables and to $3$ variables when $J^{\mu i}\ne0$.
\\

\begin{remark}
The field $\psi^i$ is solution of $D^\mu D_\mu \psi^i=0$ (equation (\ref{eq:z73})), so if $D_\mu v^i=0$, then $ \psi^i+v^i$ is another solution, and that makes the components $\psi^1, \psi^2$ and $\psi^3$ of the vector field $\psi^i$ independent variables, therefore we can regard them as the degrees of freedom of the system and solve the equations of motions in terms of them. We note that $\psi^i \mapsto \psi^i+v^i$ ($D v^i=0$) does not change the current $J^{\nu i} =2i \left( {  D_\mu  \psi _k } \right)\varepsilon {^{ki}} _j \Sigma ^{\mu \nu j} $. 
\\
\end{remark}
The Bianchi identity $D F^i=0$ implies $\left(D\chi^{i}{_j}\right) \wedge \Sigma^{j}=0$ (for $D \Sigma^{i}=0$), hence $(dK^i) \delta^i{_j}\wedge\Sigma^{j}=0$, where we used $D\delta^{ij}=0$, with using the covariant derivative $Dv^i  = dv^i  + \varepsilon ^i {_{jk}} A^j v^k $. Therefore we obtain
\begin{equation}\label{eq:a16}
\varepsilon ^{\mu \nu \rho \sigma } (\partial _\nu  K^i )\Sigma _{\rho \sigma }^i  = 0,\quad  \text{for}\quad \varepsilon^{0123}=-\varepsilon_{0123}=1.
\end{equation}
In $3+1$ decomposition of the space-time manifold $M=\Sigma \times \mathbb{R}$, let $\Sigma_t$ be the space-like slice of constant time $t$ with the coordinates $(x^a)_{a=1,2,3}$ (and $0$ is time index). The equation $\varepsilon^{\mu \nu \rho \sigma} D_\mu \Sigma_{ \nu \rho}^i=0$ ($DB^i=0$) decomposes to two equations, 
\begin{equation}\label{eq:a19}
D_a E^{a  i}=0  \quad\text{ and  }\quad  \varepsilon^{abc} D_b  B_c^{ i}=0, 
\end{equation}
in which we introduce the vector field $E^i$ and the 1-form $B^i$, 
\begin{equation}\label{eq:a24}
E^{a  i}=\epsilon^{0abc} \Sigma^{ i}_{bc} /2=\varepsilon^{abc} \Sigma^{ i}_{bc} /2, \quad\text{ and  } \quad B_c^{ i}=\Sigma^{ i}_{0c},
\end{equation}
on the space-like slice $\Sigma_t$(the field $E^{a  i}$ is conjugate to the connection $A^i_a$). The covariant derivative on $TM \otimes \mathfrak{so}(3, \mathbb{C})_P$ is $D_\mu=\nabla _\mu+ A _\mu$, for $A_\mu^{ij}=\varepsilon^{ij}{_k} A_\mu^{k}$, therefore on the 3d surface $\Sigma_t$, it becomes $D_a=\nabla _a+ A _a$.
\\

The equation (\ref{eq:a16}) decomposes to (for $\partial _0 K^i=0$)
\begin{equation}\label{eq:a17}
\begin{split}
& \varepsilon ^{0abc} (\partial _a K^i )\Sigma _{bc}^i  = (\partial _a K^i )\varepsilon ^{abc} \Sigma _{bc}^i  = 2(\partial _a K^i )E^{ai}  = 0, \\  
and&\\
 &\varepsilon ^{abc0} (\partial _b K^i )\Sigma _{c0}^i  =- \varepsilon ^{abc} (\partial _b K^i )B_c^i  = 0. 
\end{split} 
\end{equation}
We can solve them by writing (for non-zero curvature $F^i(A)$)
\begin{equation*}
 E^{ai}  = \frac{1}{2}\varepsilon ^{abc} (\partial _b K^i )r_c^i, \quad  \quad B_c^i  = (\partial _c K^i )u^i ,
\end{equation*}
for some $r^i \in\Omega^1\left(M; \mathfrak{so}(3, \mathbb{C})_P\right) $ and $u ^i  \in \Gamma\left(M; \mathfrak{so}(3, \mathbb{C})_P\right)$. The functions $K^i$ are scalars, the indices are just for distinguishing each from the others. Thus we get the solutions
\begin{equation}\label{eq:a18}
\Sigma ^i_{ab}  =-\Sigma ^i_{ba}  = (\partial _{[a} K^i )r_{b]}^i, \quad \text{ and  } \quad \Sigma ^i_{0a} =-\Sigma ^i_{a0} =  (\partial _a K^i )u^i.
\end{equation}
The equation (\ref{eq:a19}) implies $D r^i=0$ and $D u^i=0$.
\\

In the static case $J^{a }_i=0$(zero current) with $J^0_i\ne 0$(non-zero charge), the spin current formula $J^{\nu }_k=2i\varepsilon  _{kij} \left( D_\mu \psi ^i\right)  \Sigma ^{\mu \nu j }$, equation (\ref{eq:48}), decomposes to two equations
\begin{equation}\label{eq:a20}
\begin{split}
&J^{b }_k=2i\varepsilon_{kij} \left( D_a \psi ^i\right)  \Sigma ^{ab j }=0,   \\
and&\\
&J^{0 }_k=2i \varepsilon  _{kij} \left( D_a \psi ^i\right)  \Sigma ^{a 0 j }\ne 0.
\end{split} 
\end{equation}
We can solve the first equation in terms of $D_a \psi ^i$ by writing 
\[
\Sigma^{abi}=-\Sigma^{bai}=f v^{[a} D^{b]} \psi ^i, 
\]
for some vector $v \in \Gamma\left(M; T \Sigma\right)$ that satisfies $v^a D_a \psi ^i=0$, and $f$ is scalar function on $M$. We can include $f$ in $v$, so we just write $\Sigma^{abi}=v^{[a} D^{b]} \psi ^i$. We note that we have $D^a \psi ^i=g^{ab} D_b \psi ^i$ without needing specifying the used metric $g_{ab}$(the solution is satisfied by using an arbitrary metric). If $J^{a }_k \ne 0$, we let $\Sigma^{abi}=v^{[a} D^{b]} G ^i$ with some field $G^i\ne \psi^i$. 
\\

Regarding the second equation of (\ref{eq:a20}), when $J^{0 i}=0$, we get the solution $ \Sigma ^{0a  i }=-\Sigma ^{ a0 i }=f g^{ab} D_b \psi ^i$. And when $J^{0 i}\ne0$, we let $ \Sigma ^{0a i }=f g^{ab} D_b \xi ^i$ for some vector field $\xi^i \ne \psi ^i$, and without needing specifying the used metric $g_{ab}$ because the solution is satisfied by using an arbitrary metric. Using the equation (\ref{eq:z40}), $D_\mu \Sigma^{\mu\nu i} =0$, and for $\nu=0$, we have $D_a \Sigma^{a 0 i}=0$. Therefore the solution $ \Sigma ^{0a  i }=-\Sigma ^{ a0 i }=f g^{ab} D_b \xi ^i$ implies 
\begin{equation}\label{eq:z41}
g^{ab} D_a D_b \xi ^i=0,
\end{equation}
for constant $f$ and with using $\nabla_a g_{bc}=0$ which defines affine connection on $\Sigma_t$ in terms of the arbitrary metric $g_{ab}$. Therefore in the vacuum, where $\xi^i =\psi ^i$, we obtain $g^{ab} D_a D_b \psi ^i=0$. And to satisfy $D_0 \Sigma^{0 a i}+ D_b \Sigma^{b a i}=0$, we use the previous solutions and let $D_0$ be determined in terms of them. And in order to satisfy the invariance under coordinates transformations on $M$, we let the equation $g^{ab} D_a D_b \psi ^i=0$ on $\Sigma_t$ come from pulling-back of the equation $g^{\mu\nu} D_\mu D_\nu \psi ^i=0$ on $M$, because the 3d space-like surface $\Sigma_t$ is immersed in the 4d space-time manifold $M$, thus pulling-back of 
\begin{equation}\label{eq:z73}
g^{\mu\nu} D_\mu D_\nu \psi ^i=0 , \quad \nabla_\mu g_{\nu \rho}=0,
\end{equation}
under the immersion map $i: \Sigma_t \to M$ gives $g^{ab} D_a D_b \psi ^i=0$, therefore $g_{\mu \nu}$ is also arbitrary metric on $M$. Regarding the discussion in the introduction, we let the metric in $g^{\mu\nu} D_\mu D_\nu \psi ^i=0$ be flat, at least locally, in order to avoid the influences of geometry of $M$ on the field $\psi ^i$ which relates with the spin current by the formula (\ref{eq:48}). 
\\

\begin{lemma}
By comparing the solutions $\Sigma^{abi}=v^{[a} D^{b]} \psi ^i$ and $ \Sigma ^{0 a i }=f D^a \xi ^i$ of the equations (\ref{eq:a20}) with the solutions (\ref{eq:a18}), and in order to get a correspondence between that solutions, e.g, by using a metric, we find that 
\begin{equation}\label{eq:a22}
\begin{split}
&\psi ^i=K^i b^i, \quad r_c^i=v_c b^i, \quad \xi ^i =K^i u^i , \\
and&\\
&Db^i=0, \quad Du^i=0, \quad dv=0 , \quad f=1,
\end{split} 
\end{equation}
for some vector fields $b^i, u^i  \in \Gamma\left(M; \mathfrak{so}(3, \mathbb{C})_P\right)$. 
\end{lemma}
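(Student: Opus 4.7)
The plan is to establish the identifications by matching the two parameterizations term by term after using a (locally flat) metric to convert between lower and upper spatial indices, and then to enforce the differential constraints already derived above the lemma statement.

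First, I would raise spatial indices on (\ref{eq:a18}) to write
\begin{equation*}
\Sigma^{abi} = g^{aa'}\,g^{bb'}(\partial_{[a'} K^i)\,r^i_{b']} = (\partial^{[a} K^i)\,r^{ib]},
\end{equation*}
and equate the result with the ansatz $\Sigma^{abi} = v^{[a} D^{b]} \psi^i$ from (\ref{eq:a20}). The natural way to match the common factor $\partial K^i$ is to try $\psi^i = K^i b^i$ (no sum on $i$) with $Db^i = 0$, so that $D^b \psi^i = (\partial^b K^i)\,b^i$; identification of the remaining antisymmetric tensor then forces $r^{ib} = v^b b^i$, i.e.\ $r^i_c = v_c b^i$, up to sign conventions absorbed into the definitions of $v$ and $b^i$. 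The analogous matching for the timelike part, $\Sigma^i_{0a} = (\partial_a K^i)\, u^i$ against $\Sigma^{0ai} = f D^a \xi^i$, suggests $\xi^i = K^i u^i$ with $Du^i = 0$, and comparing scalar coefficients fixes $f = 1$.

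Second, I would verify the differential constraints. The text preceding the lemma already notes that (\ref{eq:a19}) forces $Dr^i = 0$ and $Du^i = 0$. Substituting the ansatz $r^i = v\,b^i$ into $Dr^i = 0$ and applying the Leibniz rule yields $(dv)\, b^i - v \wedge (Db^i) = 0$; since the three coefficients $b^i$ are generically independent, this splits into the conditions $dv = 0$ and $Db^i = 0$. The condition $Du^i = 0$ is immediate from the original statement.

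The hard part will be handling the index conventions carefully: the superscript $i$ on $K^i$ is merely a label (denoting three independent scalar functions), while the $i$ on $\psi^i$ is an internal $\mathfrak{so}(3,\mathbb{C})$ index, so the relation $\psi^i = K^i b^i$ has to be read componentwise without implicit summation, and the step $D^b \psi^i = (\partial^b K^i)\,b^i$ relies on $Db^i = 0$ understood in the same componentwise sense. A related subtlety is to confirm that the final identifications are metric-independent, consistent with the paper's earlier remark that $g^{ab}$ in (\ref{eq:z73}) may be chosen arbitrarily; I would therefore organize the matching so that the auxiliary metric used to raise indices in the first step cancels out of the identifications, preserving the topological character of the construction.
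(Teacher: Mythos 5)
Your proposal is correct and follows essentially the same route as the paper, which itself offers no explicit proof beyond the phrase ``by comparing the solutions \ldots by using a metric'': one matches the two parameterizations of $\Sigma^{abi}$ and $\Sigma^{0ai}$ against (\ref{eq:a18}) after raising indices, reads off $\psi^i=K^ib^i$, $r^i_c=v_cb^i$, $\xi^i=K^iu^i$, $f=1$ as the natural (componentwise, no sum on $i$) identifications, and then uses $Dr^i=0$, $Du^i=0$ from (\ref{eq:a19}) together with the Leibniz rule to extract $dv=0$ and $Db^i=0$. Your added care about the label-versus-internal-index reading of $i$ and about the cancellation of the auxiliary metric is a useful sharpening of what the paper leaves implicit, but it does not constitute a different method.
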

By that we obtain the solutions
\begin{equation}\label{eq:a27}
\begin{split}
&E^{i}=\frac{1}{2}\epsilon^{abc} \Sigma_{ab}^i \partial_c=\frac{1}{2}\epsilon^{abc} v_a \left(\partial_b K ^i\right) b^i \partial_c  \in \Gamma\left(M;  T \Sigma \otimes \mathfrak{so}(3, \mathbb{C})_P\right), \\
and&\\
 &B^i  = \Sigma_{0a}^i dx^a=\left(d K^i \right)u^i\in \Gamma\left(M;  T^* \Sigma \otimes \mathfrak{so}(3, \mathbb{C})_P\right),
\end{split} 
\end{equation}
without needing using a specific metric. Thus the solutions are described by three complex scalar functions $K^i$, a Killing vector $v^a = g^{ab} v_b$ satisfying $v^a \partial_a K^i$=0, and two vector fields $b^i$ and $u^i$. In the vacuum, it must be $b^i=u^i$.
\\

\begin{remark}\label{eq:z2}
Regarding the solutions of the equations (\ref{eq:a18}) and (\ref{eq:a20}), we note that for every two solutions of $\Sigma_{ab}^i$ and $\Sigma^{abi}$, we get a metric $g_{ab}$ satisfying $\Sigma_{ab}^i=g_{aa'} g_{bb'} \Sigma^{a' b' i}$. Also we note that the metric used in $\Sigma_{ab}^i=g_{aa'} g_{bb'} \Sigma^{a' b' i}$ is not necessary the same metric used in $D^a \psi ^i=g^{ab} D_b \psi ^i$ for getting the solutions of (\ref{eq:a20}). Therefore the metric in BF theory is a derived quantity from the solutions of the 2-form $B  \in \Omega^2(M; \mathfrak{so}(3, \mathbb{C}))$.
\\
\end{remark}
\begin{remark}\label{eq:z3}
In the solution (\ref{eq:a27}), we see that $\Sigma^{IJ}$ can be written as $e^I \wedge e^J$, as required in constraint BF theory to get gravity theory, that is, according to self-dual projection, there is vector fields $b^I$ and $K^I$ satisfying $K^i b^i=P^i_{IJ} b^I (K^J b^J)$, therefore $\Sigma^{IJ}_{ab}=\left(b^I v_{[a} \right)D_{b]}\left( K^J b^J\right) $, then we can write $e^I_a=v_{a}b^I$ and $e^J_b=\left(\partial_{b} K^J\right)b^J$. And from $\Sigma_{0a}^{IJ}=u^I \left(\partial_a K^J \right)u^J$, we get $e^J_a=\left(\partial_{a} K^J\right)u^J$ and $e^I_0=v_0 u^I$, for $v_0=1$. A more general case is to find three vector fields $b_1^I$, $b_2^I$ and $K^I$ satisfying $K^i b^i=P^i_{IJ} b_1^I (K^J b_2^J)$, therefore $\Sigma^{IJ}_{ab}=\left(b_1^I v_{[a} \right)D_{b]}\left( K^J b_2^J\right) $, then we can write $e^I_a=v_{a}b_1^I$ and $e^J_b=\left(\partial_{b} K^J\right)b_2^J$. And from $\Sigma_{0a}^{IJ}=u_1^I \left(\partial_a K^J \right)u_2^J$, we get $e^J_a=\left(\partial_{a} K^J\right)u_2^J$ and $e^I_0=v_0 u_1^I$, for $v_0=1$. By that the (\ref{eq:a27}) can be written as $\Sigma^i=P^i_{IJ} \Sigma^{IJ}$ for $\Sigma^{IJ}=e^I \wedge e^J$. But we have to note that the solution (\ref{eq:a27}) is a general solution and we have to find a special solution, like to let $b^i$ be constant field and write $u^i$ in terms of it, as we will do in the following study.
\\
\end{remark}

Using the solution $ \Sigma ^{0a  i }=-\Sigma ^{ a0 i }= g^{ab} D_b \xi ^i=g^{ab} (\partial_b K ^i) u^i$ in the second equation of (\ref{eq:a20}), implies 
\begin{equation}\label{eq:a21}
\begin{split} 
J^{0 }_k=Q_k&=-2i\epsilon  _{kij} g^{ab} \left(b^i \partial_a K ^i\right)  \left(u^j \partial_b K^j \right) \\
&= - i\epsilon_{kij}g^{ab} \left( {b^i \partial _a K^i u^j \partial_b K^j  - b^j \partial _a K^j u^i \partial_b K^i } \right)\\
&=-i \epsilon  _{kij}g^{ab} \left( \partial_a K ^i\right)  \left(\partial_b K^j \right)(b^i u^j-b^j u^i).
\end{split} 
\end{equation}
We see that $J^{0 }_k \ne 0$ takes place only when $b \ne u$. Therefore in the vacuum it must be $b= u$. Regarding the equation (\ref{eq:z73}), the field $\psi ^i=K^i b^i$ satisfies $D^2 \psi ^i=0$, so $D b^i=0$ implies $\nabla^2 K^i=0$. If $J^{a }_k \ne 0$, we let $\Sigma^{abi}=v^{[a} (\partial^{b]} g ^i ) u^i$ with some field $g^i\ne K^i$. 
\\

If the charges $J^{0 i} \ne 0$ are given as a functions on $M$, and in order to get a solution using them, we let $b^i \in \Gamma\left(M; \mathfrak{so}(3, \mathbb{C})_P\right)$ be constant field on $M$, so we can determine the scalar functions $K ^i$ using the equation (\ref{eq:a21}), and so obtaining the vector $v\in \Gamma\left(M; T \Sigma\right)$ using $v^a \partial_a K^i=0$. But to satisfy $D b^i  = db^i  + \varepsilon ^i _{jk} A^j b^k  = 0$ for a constant vector field $b^i$, the connection $A^i_\mu$ must be written as $A^i_\mu=A_\mu b^i$. And we choose $u^i=b^i+f(x) b^i +a^i$, for a constant $a^i \in \Gamma\left(M; \mathfrak{so}(3, \mathbb{C})_P\right)$ satisfying $a_i b^i=0$, the function $f$ is needed for satisfying $Du^i=0$. We will see examples of determining $b^i$ and $u^i$ in spherical and cylindrical symmetries. 
\\

Another way of getting the solutions using the charges $J^{0 i} \ne 0$ is by using the solutions of $K^i$(obtained from $\nabla^2 K^i=0$) in the equation (\ref{eq:a21}), and by using $u^i=b^i+f(x) b^i +a^i$ (for constant $a^i$), one gets the field $b^i$, so getting the connection $A^i$ from $D b^i  =0$. We obtain $B^{ai}$ and $E^{ai}$ using the equations (\ref{eq:a27}), and obtain the matrix $\chi$ using the equation (\ref{eq:a26}), so obtaining the curvature $F=\chi \Sigma$. We note that $v^a E^{i}_a=0$, $v_a B^{ai}=0$ and $v^a \partial_a K^i=0$ depend on the symmetry of the system, for example, spherical symmetry, cylindrical symmetry, and so on. Thus we have seen that we can solve the equations of motion of BF theory without need using a specific metric on the manifold $M$, and the metric can be obtained from the solutions of $(\Sigma^{0ai}, \Sigma^{abi})$ and $(\Sigma^i_{0a}, \Sigma^i_{ab})$ according the remark (\ref{eq:z2}).
\\

\section{Solutions in spherical symmetric system}
We have seen that we can solve the equations of motion in BF theory by using a complex vector field $\psi^i=K^i b^i  \in \Gamma\left(M; \mathfrak{so}(3, \mathbb{C})_P\right)$ which allows us to obtain $v, E^i, B^i$ and $J^{0i}$ according to the equations (\ref{eq:a22}), (\ref{eq:a27}) and (\ref{eq:a21}). We try to find the solutions in spherical symmetric system in the vacuum($u=b$) and then apply it for matter located at a point. As we have seen that the solution of the system regards the symmetry of that system since we search for a vector $v\in \Gamma\left(M; T \Sigma\right)$ that satisfies $v^a D_a \psi ^i=0$, $v^a E^{i}_a=0$ and $v^a B^{i}_a=0$. For example, in spherical symmetry, we use the spherically coordinates $(r, \theta, \varphi)$ on the space-like slice $\Sigma_t=\Sigma=\mathbb{R}^3$. And according to the equation (\ref{eq:z41}), we have $D^2 \psi^i=0$ on $\Sigma_t$ without specifying the used metric, therefore we let it be the standard metric in the spherical coordinates. In spherical symmetry and static case, we let the vector field $\psi ^i$ depend only on the radius $r$, we get (for $D b^i=0$)
\begin{equation}\label{eq:1000}
D^2 \psi ^i  =D^2 (K^i b^i)=b^i\nabla^2 K^i=b^i \frac{1}{{r^2 }}\frac{\partial }{{\partial r}}\left( {r^2 \frac{\partial }{{\partial r}}K ^i } \right) = 0 \Rightarrow K ^i  =  \frac{{c^i }}{r},
\end{equation}
so $ \psi ^i  =c^i b^i/r$, for some constants $c^i\in \mathbb{R}$. Actually we can include $c^i$ in $b^i$ and just write $ \psi ^i  = b^i/r$. Therefore 
\[
D\psi ^i  = b^i (d K ^i  )= b^i \left( {d r\partial _r  + d \theta  \partial _\theta   + d \varphi \partial _\varphi  } \right)\frac{{1}}{r} =  - \frac{{ b^i }}{{r^2 }}d r,
\]
thus the 1-form $v$ ($dv=0$, $g^{ab} v_a D_b \psi ^i=0$) is
\[
v=a_1  d\theta    +a_2 d \varphi , \text{  } a_1, a_2 \in \mathbb{R},
\]
where in the spherical symmetry we let $a_1$ and $a_2$ do not depend on the coordinates $\theta$ and $\varphi$. The values of the constants $a_1$ and $a_2$ are not significant since $a^i=g^{ij} a_j$ is Killing vector, thus we set $a_1=a_2=1$. The used metric $g_{ab}$ here is the standard metric in the spherical coordinates, because we do not define any other metric. 
\\

Using the equation (\ref{eq:a27}), we get the solutions of the 1-form $B^i$ and the vector field $E^i$, 
\[
B^i  =(d K ^i  ) u^i =  - \frac{{b^i }}{{r^2 }}dr,  \quad for \quad u=b,
\]
\begin{equation*}
\begin{split}
E^i  = \frac{1}{2 }\varepsilon^{abc}\Sigma_{bc}^i \partial_a= \frac{1}{2 } \varepsilon^{abc} v_b  \left( \partial_c K^i \right) b^i\partial_a& = -\varepsilon ^{\varphi \theta r} \frac{{ b^i }}{{2r^2 }}\partial_\varphi  - \varepsilon ^{\theta \varphi r} \frac{{ b^i }}{{2r^2 }}\partial_\theta \\
&= \varepsilon ^{\varphi r\theta } \frac{{ b^i }}{{2r^2 }}\partial_\varphi  + \varepsilon ^{\theta r \varphi } \frac{{ b^i }}{{2r^2 }}\partial_\theta. 
\end{split} 
\end{equation*}
By that we get 
\begin{equation}\label{eq:a43}
\Sigma_{0r}^i =  - \frac{{b^i }}{{r^2 }}, \quad \Sigma_{r\theta}^i=\frac{{ b^i }}{{2r^2 }},  \quad \Sigma_{r\varphi}^i=\frac{{ b^i }}{{2r^2 }},
\end{equation}
while the other components like $\Sigma_{0 \theta}^i,\Sigma_{0\varphi}^i,...$, can be obtained by using gravitational fields $e^I_\mu$ derived from the solutions (\ref{eq:a43}). That is according to self-dual map, there are at least two constant fields $b_1^I$ and $b_2^I$ satisfying $b^i=P^i_{IJ} b_1^I b_2^J$. So from $\Sigma_{\mu \nu}^i=P^i_{IJ} e_\mu^I e_\nu^J$(regarding remark (\ref{eq:z3})), we get the gravitational fields 
\begin{equation*}\label{eq:z70}
e_0^I=- \frac{b_1^I }{r}, \quad e_r^I= \frac{b_2^I }{r} , \quad e_\theta^I=- \frac{b_1^I }{2r} ,  \quad e_\varphi^I=- \frac{b_1^I }{2r}.
\end{equation*}
We obtain the matrix $\chi$ using the equation (\ref{eq:a26}) with the solution (\ref{eq:1000}),
\begin{equation*}
\chi=\left( K_j  \delta^i_j\right)=\frac{1}{{r }} \text{diag}(c^1, c^2, c^3), 
\end{equation*}
where the constants $c^i$ have to be determined in order to satisfy the condition $tr\chi=0$(in the vacuum), so $\sum\limits_{i = 1}^3 {c_i  = 0} $. Thus we get the curvature $F=\chi \Sigma+\chi' \bar\Sigma$ (with setting $\chi'=0$ in the vacuum \cite{Kirill}),
\begin{equation}\label{eq:a40}
\begin{split}
& F_{0r}^i  = \chi ^i {_j} \Sigma _{0r}^j  =- \frac{{c^i b^i}}{{r^3 }} , \quad  \quad F_{r \theta}^i  = \chi ^i {_j} \Sigma_{r \theta}^j  =\frac{{c^i  b^i }}{{2r^3 }},\\
and &\\
& F_{r \varphi}^i  = \chi ^i {_j} \Sigma_{r \varphi}^j  =\frac{{ c^i b^i }}{{2r^3 }}.
\end{split} 
\end{equation}
Now we calculate the connection $A^i$ and the field $b^i$ which satisfies $D b^i=0$. Using $ F^i  = dA^i  + \varepsilon ^i {_{jk}} A^j  \wedge A^k $, we obtain
\begin{equation*}
\begin{split} 
& F_{0r}^i  = \frac{1}{2}\left( {\partial _0 A_r^i  - \partial _r A_0^i } \right) + \varepsilon ^i {_{jk}} A_0^j A_r^k , \\ 
& F_{\theta r}^i  = \frac{1}{2}\left( {\partial _\theta  A_r^i  - \partial _r A_\theta ^i } \right) + \varepsilon ^i {_{jk}} A_\theta ^j A_r^k , \\ 
& F_{\varphi r}^i  = \frac{1}{2}\left( {\partial _\varphi  A_r^i  - \partial _r A_\varphi ^i } \right) + \varepsilon ^i {_{jk}} A_\varphi ^j A_r^k. 
\end{split} 
\end{equation*}
Since we study a spherical symmetric system, we let the connection $A^i$ depend only on $r$. If we choose the gauge $A_r^i=0, i=1,2,3$, we get 
\[
F_{0r}^i  =  - \frac{1}{2}\partial _r A_0^i ,\quad \quad F_{\theta r}^i  =  - \frac{1}{2}\partial _r A_\theta ^i ,\quad \quad F_{\varphi r}^i  =  - \frac{1}{2}\partial _r A_\varphi ^i ,
\]
therefore by using the solution (\ref{eq:a40}), we obtain
\[
 - \frac{1}{2}\partial _r A_0^i  =  - \frac{{c^i b^i }}{{r^3 }},\quad \quad \frac{1}{2}\partial _r A_\theta ^i  = \frac{{c^i  b^i }}{{2r^3 }},\quad \quad \frac{1}{2}\partial _r A_\varphi ^i  = \frac{{ c^i b^i }}{{2r^3 }}.
\]
But $D b^i  = db^i  + \varepsilon ^i _{jk} A^j b^k  = 0$ and $\partial _\mu b^i=0$ for $\mu\ne r$, therefore
\begin{equation*}
\begin{split}
 &\partial _r b^i  + \varepsilon ^i {_{jk}} A_r^j b^k  = \partial _r b^i  = 0,\quad \quad \partial _0 b^i  + \varepsilon ^i {_{jk}} A_0^j b^k  = \varepsilon ^i {_{jk}} A_0^j b^k  = 0, \\ 
& \partial _\theta  b^i  + \varepsilon ^i {_{jk}} A_\theta ^j b^k  = \varepsilon ^i {_{jk}} A_\theta ^j b^k   = 0,\quad \quad \partial _\varphi  b^i  + \varepsilon ^i {_{jk}} A_\varphi ^j b^k  =\varepsilon ^i {_{jk}} A_\varphi ^j b^k = 0.
\end{split} 
\end{equation*}
Therefore the field $b^i \in \Gamma\left(M; \mathfrak{so}(3, \mathbb{C})_P\right)$ is constant and we get the solution
\[
A_0^i  =  - \frac{{c^i b^i }}{{r^2 }},\quad \quad A_\theta ^i  =  - \frac{{c^i  b^i }}{{2r^2 }},\quad \quad A_\varphi ^i  =  - \frac{{ c^i b^i }}{{2r^2 }},
\]
where we used $\varepsilon ^i {_{jk}} b ^j b^k  =0$. Thus in this solution the field $b^i \in \Gamma\left(M; \mathfrak{so}(3, \mathbb{C})_P\right)$ is constant on $M=\Sigma \times \mathbb{R}$. Next we try to find $b^i$ in the case of matter located at a point.
\\

\subsection{Solutions for matter located at a point}
If we have matter located at a point in $\Sigma_t=\Sigma=\mathbb{R}^3$, thus we have a spherical symmetry system in a static case $J^0_ i \ne 0$, $J^a_ i = 0$. We let that point be the origin $(0,0,0) \in \mathbb{R}^3$, therefore the charge (\ref{eq:a21}) is given by $Q^i(x)=Q^i_0 \delta^3(x)$, so $\int\limits_{\mathbb{R}^3} {Q^i_0 \delta ^3 (x)}  = Q^i_0=constant $(conservation of the charges). In order to get a same solution, as in the equations (\ref{eq:a43}) and (\ref{eq:a40}), we keep the field $b^i$ be constant, and in the formula (\ref{eq:a21})
\begin{equation}\label{eq:a42}
\begin{split} 
J^{0 }_k(x)=Q_k(x)=-i \epsilon  _{kij}\left( \partial_a K^i\right)  \left(\partial^a K^j \right)(b^i u^j-b^j u^i),
\end{split} 
\end{equation}
we use $Q^i(x)=Q^i_0 \delta^3(x)$. 
\\

In spherical symmetry, the functions $K^i$ are given by $K^i=c^i/r$, the equation (\ref{eq:1000}), therefore
\begin{equation*}
Q_k(x)=-2i g^{rr}\left( \partial_r K^i \partial_r K^j\right)   \left( \epsilon  _{kij} b^i u^j\right)=-i 2 \frac{c^i c^j}{r^4} \left( \epsilon  _{kij} b^i u^j\right).
\end{equation*}
Therefore in order to get $Q^i=Q^i_0 \delta^3(x)$, we replace $1/{r^4} $ with $ 1/(r^4+\epsilon^4)$, for some infinitesimal parameter $\epsilon\to 0^+ $, and we choose a solution for the field $u^j$ like
\[
u^j=b^j+\epsilon f b^j +\epsilon a^j /({-2i\pi^2  \sqrt{2} }),
\]
for some function $f$ on $M$ that is needed for satisfying $Du^i=0$ and a constant vector field $a^i \in \Gamma\left(M; \mathfrak{so}(3, \mathbb{C})_P\right)$. With that we obtain (for $i,j\ne k$)
\begin{equation*}
Q_k(r)=-2i  \frac{c^i c^j}{r^4+\epsilon^4} \epsilon  _{kij} b^i \left( b^j+\epsilon f b^j +\frac{\epsilon}{-2i\pi^2  \sqrt{2}} a^j \right)= \frac{1}{\pi^2\sqrt{2}}\frac{\epsilon}{r^4+\epsilon^4} \epsilon  _{kij} (c^i b^i) (c^j a^j).
\end{equation*}
\\
Comparing with $Q^i(x)=Q^i_0 \delta^3(x)$, we find $ \epsilon ^k {_{ij}} (c^i b^i) (c^j a^j)=Q_0^k=constant$, and by imposing $(c_i a_i) (c^i a^i)=1$ with $(c_i b_i) (c^i a^i)=0$, we find 
\[
(cb)^2=(c_i b_i) (c^i b^i)=Q_{0k} Q_0^k, 
\]
thus we choose $c^i b^i=e^i \sqrt{Q_{0k} Q_0^k} $ for $\left\| e^i \right\|=1$, so the constant field $b^i$ is determined by $ \sqrt{Q_{0k} Q_0^k} $ with free ${SO}(3, \mathbb{C})$ rotation.
\\

By that, for $r>0$, $\epsilon\to 0^+$, we obtain the same solutions as in the equations (\ref{eq:a43}) and (\ref{eq:a40}), but with $c^i b^i=e^i \sqrt{Q_{0k} Q_0^k} $ for $\left\| e^i \right\|=1$ and $\sum\limits_{i = 1}^3 {c_i  = 0} $. Since $c^i b^i$ is finite value, it is not sufficient to let the constants $c_i $ take arbitrary values, so we choose them to be $(c_i)=(1, 1, -2)$.
\\

By that we have given an example for the possibility of solving the equations of motion in BF theory without need using a gravitational metric on $M$, we just need using a vector field $\psi^i \in \Gamma\left(M; \mathfrak{so}(3, \mathbb{C})_P\right)$ which is defined in the spin current $J^\mu_ k = 2i \varepsilon  _{kij} \left( D_\nu \psi ^i\right)  \Sigma ^{\mu \nu j }$ of matter, the equation (\ref{eq:48}). Also we saw that the solutions depend on the symmetry of the system, since we need for obtaining the solutions some vector $v$ that satisfies $v^a \partial_a K^i=0$, $v^a E^{i}_a=0$, $v_a B^{ai}=0$.

\section{Solutions in cylindrical symmetric system}
In cylindrical symmetric system, we let matter be homogeneously located along the $Z$-axis. As we did in spherically symmetry, we search for the field $\psi^i=K^i b^i  \in \Gamma\left(M; \mathfrak{so}(3, \mathbb{C})_P\right)$, then we calculate $v, E^i, B^i$ and $J^{0i}$ according to the equations (\ref{eq:a22}), (\ref{eq:a27}) and (\ref{eq:a21}). The vector $v\in \Gamma\left(M; T \Sigma\right)$ satisfies $v^a D_a \psi ^i=0$, $v^a E^{i}_a=0$ and $v^a B^{i}_a=0$, thus it is Killing vector. We try to find the solution in the vacuum($u=b$) and then apply it for matter located homogeneously along the $Z$-axis. The needed information for solving the equations of motion is only the spin charge $Q^i(x)$, the equation (\ref{eq:a21}). As we mentioned before, we do not need to use a gravitational metric, we just use a standard metric. In cylindrical symmetry, we use the cylindrical coordinates $( \rho, \varphi, z)$ on the space-like slice $\Sigma_t=\Sigma=\mathbb{R}^3$ of constant time $t$. We let the vector field $\psi ^i$ depend only on the radius $\rho$, we get (for $D^2 \psi ^i  =0$ and $D b^i=0$)
\begin{equation}\label{eq:1000s}
D^2 \psi ^i  =D^2 (K^i b^i)=b^i\nabla^2 K^i=b^i \frac{1}{{\rho }}\frac{\partial }{{\partial \rho}}\left( {\rho \frac{\partial }{{\partial \rho}}K ^i } \right) = 0 \Rightarrow K ^i  =  {c^i } \log(\rho),
\end{equation}
so $ \psi ^i  =c^i b^i \log(\rho)$, for some constants $c^i\in \mathbb{R}$. Therefore 
\[
D\psi ^i  =c^i b^i (d K ^i  )= c^i b^i \left( {d \rho \partial _\rho    + d \varphi \partial _\varphi  + d z \partial _z} \right)\log(\rho) =   \frac{{ c^i b^i }}{{\rho }}d \rho,
\]
thus the 1-form $v$ ($dv=0$, $g^{ab} v_a D_b \psi ^i=0$) is
\[
v=a_1  d\varphi    +a_2 d z , \text{  } a_1, a_2 \in \mathbb{R},
\]
where in the cylindrical symmetry we let $a_1$ and $a_2$ do not depend on the coordinates $z$ and $\varphi$. The values of the constants $a_1$ and $a_2$ are not significant since $a^i=g^{ij} a_j$ is Killing vector, thus we set $a_1=a_2=1$. The used metric $g_{ab}$ here is the standard metric in the cylindrical coordinates, because we do not define any other metric. 
\\

Using the equation (\ref{eq:a27}), we get the solutions of the 1-form $B^i$ and the vector field $E^i$, 
\[
B^i  = \frac{{ c^i u^i }}{{\rho }}d \rho=   \frac{{ c^i b^i }}{{\rho }}d \rho,  \quad for \quad u=b,
\]
\begin{equation*}
E^i  = \frac{1}{2 }\varepsilon^{abc}\Sigma_{bc}^i \partial_a= \frac{1}{2 } \varepsilon^{abc} v_b  \left( \partial_c K^i \right) b^i\partial_a = \varepsilon ^{\varphi z \rho} \frac{{ c^i b^i }}{{2\rho }}\partial_\varphi  + \varepsilon ^{z \varphi \rho} \frac{{c^i b^i }}{{2\rho }}\partial_z .
\end{equation*}
By that we get 
\begin{equation}\label{eq:a43s}
\Sigma_{0\rho}^i = -\Sigma_{\rho 0}^i =  \frac{{c^i b^i }}{{\rho }}, \quad \Sigma_{\rho z}^i=-\Sigma_{z \rho}^i=-\frac{{c^i b^i }}{{2\rho }},  \quad \Sigma_{\rho\varphi}^i=-\Sigma_{\varphi \rho}^i=- \frac{{c^i b^i }}{{2\rho }},
\end{equation}
while the other components like $\Sigma_{0 z}^i,\Sigma_{0\varphi}^i,...$ are zeros.
\\

We obtain the matrix $\chi$ using the equation (\ref{eq:a26}) with the solution (\ref{eq:1000s}),
\begin{equation*}
\chi=\left( K_j  \delta^i_j\right)= \log(\rho) \text{diag}(c^1, c^2, c^3), 
\end{equation*}
where the constants $c^i$ have to be determined in order to satisfy the condition $tr\chi=0$(in the vacuum), so $\sum\limits_{i = 1}^3 {c_i  = 0} $. Thus we get the curvature $F=\chi \Sigma+\chi' \bar\Sigma$ (with setting $\chi'=0$ in the vacuum \cite{Kirill}),
\begin{equation}\label{eq:a40s}
\begin{split}
& F_{0\rho}^i  = \chi ^i {_j} \Sigma _{0\rho}^j  = \frac{{\log(\rho)}}{{\rho }}c^i c^i b^i , \quad  \quad F_{\rho z}^i  = \chi ^i {_j} \Sigma_{\rho z}^j  =-\frac{\log(\rho)}{{2\rho }}c^i c^i b^i,\\
and &\\
& F_{\rho \varphi}^i  = \chi ^i {_j} \Sigma_{\rho \varphi}^j  =-\frac{{ \log(\rho)}}{{2\rho }}c^i c^i b^i.
\end{split} 
\end{equation}
Using the gauge $A_\rho^i=0, i=1,2,3$, with letting $A^i$ depend only on $\rho$, we obtain
\[
A_0^i  = - (\log(\rho))^2( c^i)^2 b^i,\quad \quad A_z ^i  =-\frac{1}{2}  (\log(\rho))^2( c^i)^2 b^i,\quad \quad A_\varphi ^i  =-\frac{1}{2} (\log(\rho))^2( c^i)^2 b^i,
\]
where the field $b^i \in \Gamma\left(M; \mathfrak{so}(3, \mathbb{C})_P\right)$ is constant on $M=\Sigma \times \mathbb{R}$. 
\\

As we did in spherical symmetry, we find $b^i$ by using the spin charge $Q^i(x)$ which is given by the equation (\ref{eq:a21}). Since the system is static and the matter homogeneously located along the $Z$-axis, the spin charge $Q^i(\rho, \varphi, z)$ is given by $Q^i(\rho, \varphi, z)=Q^i_0 \delta(\rho)/2 \pi\rho$, which yields $\int\limits_0^\infty \int\limits_0^{2 \pi} \rho d\varphi d\rho  Q^i_0 \delta(\rho)/2 \pi\rho=Q^i_0$ for each point of $Z$. Here $Q^i_0$ is point charge located at each point of $Z$-axis. 
\\
\\
In order to get a same solution, as in the equations (\ref{eq:a43s}) and (\ref{eq:a40s}), we keep the field $b^i$ be constant, and in the formula (\ref{eq:a21})
\begin{equation}\label{eq:a42s}
\begin{split} 
J^{0 }_k(x)=Q_k(x)=-i \epsilon  _{kij}\left( \partial_a K^i\right)  \left(\partial^a K^j \right)(b^i u^j-b^j u^i),
\end{split} 
\end{equation}
we use $Q^i(x)=Q^i_0 \delta(\rho)/2 \pi \rho$.
\\
\\
In cylindrical symmetry, the functions $K^i$ are given by $K^i=c^i \log(\rho)$, the equation (\ref{eq:1000s}), therefore
\begin{equation*}
Q_k(x)=-2i g^{\rho \rho}\left( \partial_\rho K^i \partial_\rho K^j\right)   \left( \epsilon  _{kij} b^i u^j\right)=- 2i \frac{c^i c^j}{\rho^2} \left( \epsilon  _{kij} b^i u^j\right).
\end{equation*}
Therefore in order to get $Q^i=Q^i_0 \delta(\rho)/2 \pi\rho$, we replace $1/{\rho^2} $ with $ 1/{\rho \rho^{1-\epsilon}}$, for some infinitesimal parameter $\epsilon\to 0^+ $, and we choose a solution for the field $u^j$ like
\[
u^j=b^j+\epsilon f b^j +\epsilon a^j /({-4i\pi}),
\]
for some scalar function $f$ on $M$ that is needed for satisfying $Du^i=0$, with a constant vector field $a^i \in \Gamma\left(M; \mathfrak{so}(3, \mathbb{C})_P\right)$. By that we obtain (for $i,j\ne k$)
\begin{equation*}
Q_k(r)=-2i  \frac{c^i c^j}{{\rho \rho^{1-\epsilon}}} \epsilon  _{kij} b^i \left( b^j+\epsilon f b^j +\frac{\epsilon}{{-4i\pi}} a^j \right)= \frac{1}{2\pi \rho}\frac{\epsilon}{ \rho^{1-\epsilon}} \epsilon  _{kij} (c^i b^i) (c^j a^j).
\end{equation*}
\\
Comparing with $Q^i=Q^i_0 \delta(\rho)/2 \pi\rho$, we find $ \epsilon ^k {_{ij}} (c^i b^i) (c^j a^j)=Q_0^k=constant$, and by imposing $(c_i a_i) (c^i a^i)=1$ with $(c_i b_i) (c^i a^i)=0$, we find 
\[
(cb)^2=(c_i b_i) (c^i b^i)=Q_{0k} Q_0^k, 
\]
thus we choose $c^i b^i=e^i \sqrt{Q_{0k} Q_0^k} $ for $\left\| e^i \right\|=1$, so the constant field $c^i b^i$ is determined by $ \sqrt{Q_{0k} Q_0^k} $ with free ${SO}(3, \mathbb{C})$ rotation.
\\

By that, for $\rho>0$, $\epsilon\to 0^+$, we obtain the same solutions as in the equations (\ref{eq:a43s}) and (\ref{eq:a40s}), but with $c^i b^i=e^i \sqrt{Q_{0k} Q_0^k} $ for $\left\| e^i \right\|=1$ and $\sum\limits_{i = 1}^3 {c_i  = 0} $. Since $c^i b^i$ is finite value, it is not sufficient to let the constants $c_i $ take arbitrary values, so we choose them to be $(c_i)=(1, 1, -2)$.
\\

\section{Conclusions}
We have studied the BF theory including matter by redefinition the 2-form $B^i$, as $B^i+\varepsilon^i{_{jk}}\psi^j B^k$, or redefinition the Lagrangian multipliers $\varphi _{ij}$ as $ \varphi _{ij}  + \varepsilon _{\ell ik} \psi ^\ell  \chi^k {_j}$, so that we can get $DB^i=0$, in the case of non-zero spin current of matter fields. The new field $\psi^i$ is defined using the spin current vector $J^{\nu i}= (D_\mu \psi_k) \varepsilon {^{ki}}_j \Sigma^{\mu\nu j}$. We saw that we can solve the BF equations by using only the spin current of matter, that is it is enough to solve the equations $\delta S/ \delta A^i=0$, $D B^i=0$ and $J^{\nu i}= (D_\mu \psi_k) \varepsilon {^{ki}}_j \Sigma^{\mu\nu j}$ without using a gravitational metric on $M$ and without needing solving the equation $\delta S/ \delta B^i=0$ which includes the Lagrangian multiplier $\varphi_{ij}$(a non-physical variable), so we get $\varphi_{ij}$ by using the solutions in $\delta S/ \delta B^i=0$. We found that to obtain the solutions of BF theory, it is enough to use(find) the field $\psi^i$ and the Killing vector $v$(satisfies $v^a D_a\psi^i=0$) in euclidean coordinates, where it is convenient to describe the spin currents and their lines in euclidean coordinates and not need to describe them in curved coordinates. Also it is possible to obtain the solutions of BF theory using only the charges $J^{0 i} \ne 0$ when they are given as a functions on $M$ in the static case (discussion below the equation (\ref{eq:a21})). We saw that the singularities appear in solution of $\psi^i$, that relates to the idea that the spin current $J^{\mu i}$ is the source for $\psi^i$, therefore $\psi^i$ has singularities on line of that spin current, and by that the singularities appear and not by using a gravitational metric. We found that the solutions of BF theory equations depend on the symmetry of the system and every two solutions of $(\Sigma^i_{ab}, \Sigma^i_{0a})$ and $(\Sigma^{abi}, \Sigma^{0ai})$ determine a metric(remark \ref{eq:z2}), so there is no specific metric needed in BF theory because it is a topological theory, and those solutions are able to be written as $e^I\wedge e^J$ (remark \ref{eq:z3}). Finally we applied the solutions of BF theory in a spherical and cylindrical symmetric systems in static case of matter.

\section{Acknowledgements}
I am grateful to Professor Yoshihiro Fukumoto at Ritsumeikan University for many useful notes on the manuscript.

\section{Appendix A}
$\rm{I}$- We test satisfying of the equation of motion of classical general relativity in the vacuum using the solutions in BF theory, the equations have not to include the  non-physical variable $\varphi$. The canonical formalism of GR in the vacuum gives the constraints \cite{Carlo}
\begin{equation}\label{eq:ab3}
D_a E_i^a  = 0, \quad E_i^a F_{ab}^i  = 0,\quad C=\varepsilon _{ijk} E^{ai} E^{bj} F_{ab}^k  = 0.
\end{equation}
These constraints are generators of gauge symmetry and diffeomorphism invariance up to boundary terms(ignoring the boundary terms). We get $D_a E^{ai}=0$ from $\varepsilon ^{\mu \nu \rho \sigma } D_\nu  B_{\rho \sigma }^i =0$ by setting $\mu=0$ to get $\varepsilon ^{0abc } D_a  B_{b c}^i =0$, and using $\varepsilon ^{0abc }\equiv \varepsilon ^{abc }$ to get $\varepsilon ^{abc } D_a  B_{b c}^i =2 D_a E^{ai} =0$, where $E^{ai}=\varepsilon ^{abc } B_{b c}^i/2$ is conjugate to the connection $A_a^i$ on space-like slice of constant time on which we use the coordinates $(x^a)_{a=1,2,3}$. Thus the constraint $D_a E^{ai}=0$ is satisfied in BF theory.
\\

Using the equation (\ref{eq:50}), in the vacuum ${\chi'}^i {_j}=0$, it reduces to $F^i (A)  =\chi ^i {_j } \Sigma^j$, on the spacelike surface of constant time $\Sigma_t$, it becomes $F^i (A)_{ab}  =\chi ^i {_j } \Sigma^j_{ab}$. Using $E^{ai}=\varepsilon^{abc} \Sigma^i_{bc} $, we obtain
\[
F^i (A)_{ab}  =\chi ^i {_j } \Sigma^j_{ab}=\chi ^i {_j } \varepsilon_{abc} E^{j c}.
\]
Multiplying it by $E_i^a$ and summing over the contracted indices, we get
\[
E_i^a F^i (A)_{ab}  =E_i^a \chi ^i {_j } \varepsilon_{abc} E^{j c}= \chi_{ij } \varepsilon_{abc}E^{ai} E^{j c}=0,
\]
where we used the fact that the matrix $\chi_{ij }$ is symmetric. Therefore the second constraint of (\ref{eq:ab3}) is satisfied. Using $F^i (A)_{ab}  =\chi ^i {_j } \varepsilon_{abc} E^{j c}$ in $C$, yields
\[
C=\varepsilon _{ijk} E^{ai} E^{bj} \chi ^k {_{i'} } \varepsilon_{abc} E^{i' c} = \chi {_k}^{ i'}  \varepsilon^{ijk} \varepsilon_{abc}  E^{a}_i E^{b}_j E^c_{i'} .
\]
Then using $E^{a}_i=e e^{a}_i$, where $e^{a}_i$ is the inverse of the gravitational field $e_{a}^i$ and $e=\det(e_{a}^i)$, we obtain
\[
C=e^3 \chi {_k}^{ i'}  \varepsilon^{ijk} \varepsilon_{abc}  e^{a}_i e^{b}_j e^c_{i'}=e^3 \chi {_k}^{ i'}  \varepsilon^{ijk} e^{-1}  \varepsilon_{ij i'}=2 e^2 tr \chi .
\]
But in the vacuum $ tr \chi =0$(the equation (\ref{eq:55})), thus $C=\varepsilon _{ijk} E^{ai} E^{bj} F_{ab}^k  = 0$ is satisfied. By that we find that the general relativity constraints are satisfied in the vacuum using the equations of motion of BF theory. As required, the general relativity equations do not include the matrices $\chi$ and $\varphi$ and do not include the new field $\psi^i$.
\\
\\
$\rm{II}$- A principal G-bundle consists of the following data:\\
1- a manifold $P$, called the total space,\\
2- a Lie group $G$ acting freely on $P$ on the right:
\[
P\times G \to P, \quad (p,g)\mapsto pg.
\]
The free action means that the stabilizer of every point is trivial, that every element of $G$ (except the identity) moves every point in P. We assume that the space of orbits $M=P/G$ is a manifold (the base space). With projection $\pi: P\to M$ and for every $p\in M$, the submanifold $\pi^{-1}(p)\subset P$ is fibre over $M$. Let $\left\{ {U_\alpha  } \right\}$ be open cover of $M$, the local trivialization is G-equivariant diffeomorphisms 
\[
\psi _\alpha  :\pi ^{ - 1} (U_\alpha  ) \to U_\alpha   \times G,
\]
given by $\psi _\alpha  (p) = \left( {\pi (p),g_\alpha  (p)} \right)$ for some G-equivariant map $g_\alpha  :\pi ^{ - 1} (U_\alpha  ) \to G$. Equivariance means that $g_\alpha(pg)=g_\alpha(p) g$. We say that the bundle is trivial if there exists a diffeomorphism $\psi :P \to M \times G$ such that $\psi  (p) = \left( {\pi (p),\psi  (p)} \right)$ and such that $\psi(pg ) =\psi(p)g$. This last condition is simply the G-equivariance of $\psi$.

We separate $T_p P$ to vertical and horizontal vector spaces at each point $p\in P$, we get the vertical vector fields by acting of group $G$ on $P$ by
\[
\sigma _p (X) = \frac{d}{{dt}}\left. {\left( {pe^{tX} } \right)} \right|_{t = 0} ,
\]
for every vector $X\in \mathfrak{g}$ (where $ \mathfrak{g}$ is Lie algebra of $G$), this satisfies
\[
\pi _* \sigma _p (X) = \frac{d}{{dt}}\left. {\left( {\pi  \left( {pe^{tX} } \right)} \right)} \right|_{t = 0}  = \frac{d}{{dt}}\left. {\left( {\pi  \left( p \right)} \right)} \right|_{t = 0}  = 0,
\]
thus $\sigma _p (X) $ is vertical vector field at $p\in P$. In this bundle the connection is defined as a map
\[
D:\Gamma (TP) \to \Gamma (T_e G), \quad T_e G=lie(G),
\]
as usual, the connection is a map from the tangent space to itself. And since $P$ is locally product $U \times G$, then $\left. TP \right|_U =TU \times T_eG $, so locally
\[
D:\Gamma (TU) \times \Gamma(T_e G) \to \Gamma (T_e G).
\]
We can get this map by letting $D=\nabla+A$, for $A \in \Gamma (T^*U \otimes T_e G) $, and $\nabla$ is connection on $TU$.

\section{Appendix B: Calculating $ \frac{{\delta S_m }}{{\delta \Sigma^i  }} $}
Starting from
\begin{equation}\label{eq:60}
\left. {\frac{{\delta S_m }}{{\delta B_{\sigma _1 \rho _1 }^\ell  }}} \right|_{constriant}  = \frac{{\delta S_m }}{{\delta \Sigma _{\sigma _1 \rho _1 }^\ell  }} = \frac{{ - 2}}{{\sqrt { - g} }}\frac{{\delta S_m }}{{\delta g_{\mu _1 \nu _1 } }}\frac{{\sqrt { - g} }}{{ - 2}}\frac{{\delta g_{\mu _1 \nu _1 } }}{{\delta \Sigma _{\sigma _1 \rho _1 }^\ell  }} = T^{\mu _1 \nu _1 } \frac{{\sqrt { - g} }}{{ - 2}}\frac{{\delta g_{\mu _1 \nu _1 } }}{{\delta\Sigma _{\sigma _1 \rho _1 }^\ell  }},
\end{equation}
where $T^{\mu  \nu }$ is energy-momentum tensor, and $S_m$ is matter action without specifying. Using
\[
\delta g_{\mu _1 \nu _1 }  = \frac{1}{{\sqrt { - g} }}\delta \left( {\sqrt { - g} g_{\mu _1 \nu _1 } } \right) - \frac{1}{{\sqrt { - g} }}g_{\mu _1 \nu _1 } \delta \left( {\sqrt { - g} } \right),
\]
we obtain
\begin{equation}\label{eq:52}
\sqrt { - g} \frac{{\delta g_{\mu _1 \nu _1 } }}{{\delta \Sigma _{\sigma _1 \rho _1 }^\ell  }} = \frac{{\delta \left( {\sqrt { - g} g_{\mu _1 \nu _1 } } \right)}}{{\delta \Sigma _{\sigma _1 \rho _1 }^\ell  }} - g_{\mu _1 \nu _1 } \frac{{\delta \left( {\sqrt { - g} } \right)}}{{\delta \Sigma _{\sigma _1 \rho _1 }^\ell  }}=I_1 - I_2 .
\end{equation}
We use the Urbantke formula \cite{Ingemar}
\begin{equation}\label{eq:ab2}
\sqrt { - g} g_{\mu _1 \nu _1 }  = \varepsilon ^{\mu \nu \rho \sigma } \varepsilon _{ijk} \Sigma _{\mu _1 \mu }^i \Sigma _{\nu \rho }^j \Sigma _{\sigma \nu _1 }^k ,
\end{equation}
from which we get
\begin{equation*}
\begin{split}
& \delta \left( {\sqrt { - g} g_{\mu _1 \nu _1 } } \right) = \delta \left( {\varepsilon ^{\mu \nu \rho \sigma } \varepsilon _{ijk} \Sigma _{\mu _1 \mu }^i \Sigma _{\nu \rho }^j \Sigma _{\sigma \nu _1 }^k } \right) \\
&= \varepsilon ^{\mu \nu \rho \sigma } \varepsilon _{ijk} \left( {\delta \Sigma _{\mu _1 \mu }^i } \right)\Sigma _{\nu \rho }^j \Sigma _{\sigma \nu _1 }^k  + \varepsilon ^{\mu \nu \rho \sigma } \varepsilon _{ijk} \Sigma _{\mu _1 \mu }^i \left( {\delta \Sigma _{\nu \rho }^j } \right)\Sigma _{\sigma \nu _1 }^k  + \varepsilon ^{\mu \nu \rho \sigma } \varepsilon _{ijk} \Sigma _{\mu _1 \mu }^i \Sigma _{\nu \rho }^j \left( {\delta \Sigma _{\sigma \nu _1 }^k } \right).
\end{split}
\end{equation*}
\\
We use it for calculating $I_1$ in (\ref{eq:52}),
\begin{equation*}
\begin{split}
  I_1 & = \frac{\delta }{{\delta \Sigma _{\sigma _1 \rho _1 }^\ell  }}\left( {\sqrt { - g} g_{\mu _1 \nu _1 } } \right) \\
&= \varepsilon ^{\mu \nu \rho \sigma } \varepsilon _{ijk} \Sigma _{\nu \rho }^j \Sigma _{\sigma \nu _1 }^k \left( {\delta _\ell ^i \delta _{\mu _1 }^{\sigma _1 } \delta _\mu ^{\rho _1 } } \right) + \varepsilon ^{\mu \nu \rho \sigma } \varepsilon _{ijk} \Sigma _{\mu _1 \mu }^i \Sigma _{\sigma \nu _1 }^k \left( {\delta _\ell ^j \delta _\nu ^{\sigma _1 } \delta _\rho ^{\rho _1 } } \right) \\
 &\quad\quad \quad+ \varepsilon ^{\mu \nu \rho \sigma } \varepsilon _{ijk} \Sigma _{\mu _1 \mu }^i \Sigma _{\nu \rho }^j \left( {\delta _\ell ^k \delta _\sigma ^{\sigma _1 } \delta _{\nu _1 }^{\rho _1 } } \right)  \\
&= \varepsilon ^{\rho _1 \nu \rho \sigma } \varepsilon _{\ell jk} \Sigma _{\nu \rho }^j \Sigma _{\sigma \nu _1 }^k \left( {\delta _{\mu _1 }^{\sigma _1 } } \right) + \varepsilon ^{\mu \sigma _1 \rho _1 \sigma } \varepsilon _{i\ell k} \Sigma _{\mu _1 \mu }^i \Sigma _{\sigma \nu _1 }^k  + \varepsilon ^{\mu \nu \rho \sigma _1 } \varepsilon _{ij\ell } \Sigma _{\mu _1 \mu }^i \Sigma _{\nu \rho }^j \left( {\delta _{\nu _1 }^{\rho _1 } } \right) \\ 
 &= \varepsilon ^{\rho _1 \nu \rho \sigma } \varepsilon _{\ell ij} \Sigma _{\nu \rho }^i \Sigma _{\sigma \nu _1 }^j \left( {\delta _{\mu _1 }^{\sigma _1 } } \right) + \varepsilon ^{\mu \sigma _1 \rho _1 \sigma } \varepsilon _{i\ell j} \Sigma _{\mu _1 \mu }^i \Sigma _{\sigma \nu _1 }^j  + \varepsilon ^{\mu \nu \rho \sigma _1 } \varepsilon _{ij\ell } \Sigma _{\mu _1 \mu }^i \Sigma _{\nu \rho }^j \left( {\delta _{\nu _1 }^{\rho _1 } } \right) \\ 
& = \varepsilon ^{\rho _1 \nu \rho \sigma } \varepsilon _{\ell ij} \Sigma _{\nu \rho }^i \Sigma _{\sigma \nu _1 }^j \left( {\delta _{\mu _1 }^{\sigma _1 } } \right) - \varepsilon ^{\mu \sigma _1 \rho _1 \sigma } \varepsilon _{\ell ij} \Sigma _{\mu _1 \mu }^i \Sigma _{\sigma \nu _1 }^j  + \varepsilon ^{\mu \nu \rho \sigma _1 } \varepsilon _{\ell ij} \Sigma _{\mu _1 \mu }^i \Sigma _{\nu \rho }^j \left( {\delta _{\nu _1 }^{\rho _1 } } \right). \\ 
\end{split}
\end{equation*}
\\
To calculate $I_2$, we use
\[
\sqrt { - g}  = \frac{i}{6} \varepsilon ^{\mu \nu \rho \sigma } \delta _{ij} \Sigma _{\mu \nu }^i \Sigma _{\rho \sigma }^j ,
\]
hence
\begin{equation*}
\begin{split}
  I_2  &= g_{\mu _1 \nu _1 } \frac{\delta }{{\delta \Sigma _{\sigma _1 \rho _1 }^\ell  }}\left( {\sqrt { - g} } \right) =  \frac{i}{6} g_{\mu _1 \nu _1 } \varepsilon ^{\mu \nu \rho \sigma } \delta _{ij} \left( {\delta _\ell ^i \delta _\mu ^{\sigma _1 } \delta _\nu ^{\rho _1 } } \right)\Sigma _{\rho \sigma }^j + \frac{i}{6} g_{\mu _1 \nu _1 } \varepsilon ^{\mu \nu \rho \sigma } \delta _{ij} \Sigma _{\mu \nu }^i \left( {\delta _\ell ^j \delta _\rho ^{\sigma _1 } \delta _\sigma ^{\rho _1 } } \right)\\
 &=  \frac{i}{6} g_{\mu _1 \nu _1 } \varepsilon ^{\sigma _1 \rho _1 \rho \sigma } \delta _{i\ell } \Sigma _{\rho \sigma }^i  +\frac{i}{6} g_{\mu _1 \nu _1 } \varepsilon ^{\mu \nu \sigma _1 \rho _1 } \delta _{i\ell } \Sigma _{\mu \nu }^i  =  \frac{i}{3} g_{\mu _1 \nu _1 }  {\varepsilon ^{\mu \nu \sigma _1 \rho _1 } \delta _{i\ell } \Sigma _{\mu \nu }^i } .
\end{split}
\end{equation*}
Therefore
\begin{equation}\label{eq:53}
\begin{split}
 &\sqrt { - g} \frac{{\delta g_{\mu _1 \nu _1 } }}{{\delta \Sigma _{\sigma _1 \rho _1 }^\ell  }} = I_1  - I_2  \\ 
 & = \varepsilon ^{\rho _1 \nu \rho \sigma } \varepsilon _{\ell ij} \Sigma _{\nu \rho }^i \Sigma _{\sigma \nu _1 }^j \left( {\delta _{\mu _1 }^{\sigma _1 } } \right) - \varepsilon ^{\mu \sigma _1 \rho _1 \sigma } \varepsilon _{\ell ij} \Sigma _{\mu _1 \mu }^i \Sigma _{\sigma \nu _1 }^j  + \varepsilon ^{\mu \nu \rho \sigma _1 } \varepsilon _{\ell ij} \Sigma _{\mu _1 \mu }^i \Sigma _{\nu \rho }^j \left( {\delta _{\nu _1 }^{\rho _1 } } \right) \\ 
&  - \frac{i}{3}g_{\mu _1 \nu _1 } \left( {\varepsilon ^{\mu \nu \sigma _1 \rho _1 } \delta _{i\ell } \Sigma _{\mu \nu }^i } \right). 
\end{split}
\end{equation}
\\
By this the equation (\ref{eq:60}) becomes
\begin{equation}\label{eq:61}
\begin{split}
& \left. {\frac{{\delta S_m }}{{\delta B_{\sigma _1 \rho _1 }^\ell  }}} \right|_{constriant}  = \frac{{ - 1}}{2}T^{\mu _1 \nu _1 } \left( {I_1  - I_2 } \right) \\ 
 & = \frac{1}{2}T^{\mu _1 \nu _1 } \varepsilon ^{\mu \sigma _1 \rho _1 \sigma } \varepsilon _{\ell ij} \Sigma _{\mu _1 \mu }^i \Sigma _{\sigma \nu _1 }^j   - \frac{1}{2}T^{\mu _1 \nu _1 } \varepsilon ^{\mu \nu \rho \sigma _1 } \varepsilon _{\ell ij} \Sigma _{\mu _1 \mu }^i \Sigma _{\nu \rho }^j \left( {\delta _{\nu _1 }^{\rho _1 } } \right)   \\ 
&- \frac{1}{2}T^{\mu _1 \nu _1 } \varepsilon ^{\rho _1 \nu \rho \sigma } \varepsilon _{\ell ij} \Sigma _{\nu \rho }^i \Sigma _{\sigma \nu _1 }^j \left( {\delta _{\mu _1 }^{\sigma _1 } } \right)  + \frac{i}{6}T^{\mu _1 \nu _1 } g_{\mu _1 \nu _1 } {\varepsilon ^{\mu \nu \sigma _1 \rho _1 } \delta _{i\ell } \Sigma _{\mu \nu }^i } \\ 
&=I_1 +I_2 +I_3+I_4.
\end{split}
\end{equation}
Using $T^{\mu _1 \nu _1}=T^{IJ} e_I^{\mu _1} e_J^{\nu_1} $, the first term of (\ref{eq:61}) becomes
\begin{equation*}
\begin{split}
 2I_1&=T^{\mu _1 \nu _1 } \varepsilon ^{\mu \nu \sigma _1 \rho _1 } \varepsilon _{ij\ell } \Sigma _{\mu _1 \mu }^i \Sigma _{\nu \nu _1 }^j  = T^{IJ} e_I^{\mu _1 } e_J^{\nu _1 } \varepsilon ^{\mu \nu \sigma _1 \rho _1 } \varepsilon _{ij\ell } P_{KL}^i P_{K_1 L_1 }^j e_{\mu _1 }^K e_\mu ^L e_\nu ^{K_1 } e_{\nu _1 }^{L_1 }  \\ 
  &= T^{IJ} \delta _I^K \delta _J^{L_1 } \varepsilon ^{\mu \nu \sigma _1 \rho _1 } \varepsilon _{ij\ell } P_{KL}^i P_{K_1 L_1 }^j e_\mu ^L e_\nu ^{K_1 }  = T^{IJ} \varepsilon ^{\mu \nu \sigma _1 \rho _1 } \varepsilon _{ij\ell } P_{IL}^i P_{K_1 J}^j e_\mu ^L e_\nu ^{K_1 }  \\
& = T^{IJ} \varepsilon ^{\mu \nu \sigma _1 \rho _1 } \varepsilon _{ij\ell } P_{IL}^i P_{K_1 J}^j   \left(  P_n^{LK_1 } \Sigma _{\mu \nu }^n  + \bar P_n^{LK_1 } \bar \Sigma _{\mu \nu }^n\right),
\end{split}
\end{equation*}
where we used
\[
e_{[\mu }^L e_{\nu ]}^{K_1 }  = P_n^{LK_1 } \Sigma _{\mu \nu }^n  + \bar P_n^{LK_1 } \bar \Sigma _{\mu \nu }^n .
\]
\\
We use the property of the self-dual projection 
\begin{equation}\label{eq:79}
 -\frac{1}{2} \varepsilon _{ij\ell } P_{IJ}^i P_{KL}^j  =\frac{1}{4}\left( \eta _{JK} P_{\ell IL}- \eta _{JL} P_{\ell IK}\right)- \frac{1}{4}\left( {I \leftrightarrow J} \right),
\end{equation}
which can be easily checked when $I=0$ and $J, K, L$ are spatial indices, and when $I=K=0$ and $J, L$ are spatial indices, so the $SO(3, 1) \times SO(3,\mathbb{C})$ invariance asserts that this property is also satisfied when $I, J, K, L$ are all spatial indices. By using this property, we obtain
\begin{equation*}
\begin{split}
2I_1 &= T^{IJ} \varepsilon ^{\mu \nu \sigma _1 \rho _1 } \frac{-1}{2}\left( {\eta _{LK_1 } P_{\ell IJ}  - \eta _{LJ} P_{\ell IK_1 }  - \eta _{IK_1 } P_{\ell LJ}  + \eta _{IJ} P_{\ell LK_1 } } \right)e_\mu ^L e_\nu ^{K_1 }  \\ 
 & = T^{IJ} \varepsilon ^{\mu \nu \sigma _1 \rho _1 } \frac{-1}{2}\left( {P_{\ell IJ} g_{\mu \nu }  - \eta _{LJ} P_{\ell IK_1 } e_\mu ^L e_\nu ^{K_1 }  - \eta _{IK_1 } P_{\ell LJ} e_\mu ^L e_\nu ^{K_1 }  + \eta _{IJ} \Sigma _{\ell \mu \nu } } \right) \\ 
 & = T^{IJ} \varepsilon ^{\mu \nu \sigma _1 \rho _1 } \frac{-1}{2}\left(  - \eta _{LJ} P_{\ell IK_1 } e_\mu ^L e_\nu ^{K_1 }  - \eta _{IK_1 } P_{\ell LJ} e_\mu ^L e_\nu ^{K_1 }  + \eta _{IJ} \Sigma _{\ell \mu \nu }  \right)\\
 & = \frac{1}{2}T^{IJ} \varepsilon ^{\mu \nu \sigma _1 \rho _1 } \left(   \eta _{LJ} P_{\ell IK_1 }   +\eta _{IK_1 } P_{\ell LJ}   \right)e_{[\mu} ^L e_{\nu]} ^{K_1 }   - \frac{1}{2}\eta _{IJ} T^{IJ} \varepsilon ^{\mu \nu \sigma _1 \rho _1 } \Sigma _{\ell \mu \nu } ,
\end{split}
\end{equation*}
hence
\begin{equation*}
\begin{split}
  2I_1&=   \frac{1}{2}T^{IJ} \varepsilon ^{\mu \nu \sigma _1 \rho _1 } \left( {\eta _{LJ} P_{\ell IK_1 }  + \eta _{IK_1 } P_{\ell LJ} } \right)\left( {P_i^{LK_1 } \Sigma _{\mu \nu }^i  + \bar P_i^{LK_1 } \bar \Sigma _{\mu \nu }^i } \right) \\
& - \frac{1}{2}\eta _{IJ} T^{IJ} \varepsilon ^{\mu \nu \sigma _1 \rho _1 } \Sigma _{\ell \mu \nu }  .
\end{split}
\end{equation*}
Finally
\begin{equation*}
\begin{split}
 I_1&=   \frac{1}{4}T^{IJ} \varepsilon ^{\mu \nu \sigma _1 \rho _1 } \left( {\eta _{LJ} P_{\ell IK_1 }  + \eta _{IK_1 } P_{\ell LJ} } \right)\left( {P_i^{LK_1 } \Sigma _{\mu \nu }^i  + \bar P_i^{LK_1 } \bar \Sigma _{\mu \nu }^i } \right) \\
& - \frac{1}{4}\eta _{IJ} T^{IJ} \varepsilon ^{\mu \nu \sigma _1 \rho _1 } \Sigma _{\ell \mu \nu }  .
\end{split}
\end{equation*}
\\
And the second term of (\ref{eq:61}) becomes
\begin{equation*}
\begin{split}
  2I_2&=- T^{\mu _1 \nu _1 } \varepsilon ^{\mu \nu \rho \sigma _1 } \varepsilon _{ij\ell } \Sigma _{\mu _1 \mu }^i \Sigma _{\nu \rho }^j  {\delta _{\nu _1 }^{\rho _1 } }=  - T^{IJ} e_I^{\mu _1 } e_J^{\rho _1 } \varepsilon ^{\mu \nu \rho \sigma _1 } \varepsilon _{ij\ell } P_{KL}^i P_{K_1 L_1 }^j e_{\mu _1 }^K e_\mu ^L e_\nu ^{K_1 } e_\rho ^{L_1 }  \\
&=  - T^{IJ} \delta _I^K e_J^{\rho _1 } \varepsilon ^{\mu \nu \rho \sigma _1 } \varepsilon _{ij\ell } P_{KL}^i P_{K_1 L_1 }^j e_\mu ^L e_\nu ^{K_1 } e_\rho ^{L_1 } =  - T^{IJ} e_J^{\rho _1 } \varepsilon ^{\mu \nu \rho \sigma _1 } \varepsilon _{ij\ell } P_{IL}^i P_{K_1 L_1 }^j e_\mu ^L e_\nu ^{K_1 } e_\rho ^{L_1 }   \\ 
&=  - T^{IJ} \varepsilon ^{\mu \nu \rho \sigma _1 } \frac{-1}{2}\left( {\eta _{LK_1 } P_{\ell IL_1 }  - \eta _{LL_1 } P_{\ell IK_1 }  - \eta _{IK_1 } P_{\ell LL_1 }  + \eta _{IL_1 } P_{\ell LK_1 } } \right)e_J^{\rho _1 } e_\mu ^L e_\nu ^{K_1 } e_\rho ^{L_1 } = \\ 
     \frac{1}{2}T^{IJ}& \varepsilon ^{\mu \nu \rho \sigma _1 } \left( {P_{\ell IL_1 } g_{\mu \nu } e_\rho ^{L_1 }  - P_{\ell IK_1 } g_{\mu \rho } e_\nu ^{K_1 }  - \eta _{IK_1 } P_{\ell LL_1 } e_\mu ^L e_\nu ^{K_1 } e_\rho ^{L_1 }  + \eta _{IL_1 } P_{\ell LK_1 } e_\mu ^L e_\nu ^{K_1 } e_\rho ^{L_1 } } \right)e_J^{\rho _1 }  \\ 
 & =  \frac{1}{2}T^{IJ} \varepsilon ^{\mu \nu \rho \sigma _1 } \left( { - \eta _{IK_1 } P_{\ell LL_1 }  + \eta _{IL_1 } P_{\ell LK_1 }  } \right)e_J^{\rho _1 } e_\mu ^L e_\nu ^{K_1 } e_\rho ^{L_1 }.
\end{split}
\end{equation*}
Using $\varepsilon ^{\mu \nu \rho \sigma _1 } e_\mu ^L e_\nu ^{K_1 } e_\rho ^{L_1 }  = e\varepsilon ^{LK_1 L_1 M} e_M^{\sigma _1 } $, where $e$ is the determinant of $(e^I_\mu)$, we obtain
\begin{equation*}
\begin{split}
 2 I_2&  =   \frac{1}{2}eT^{IJ} \left( { - \eta _{IK_1 } P_{\ell LL_1 }  + \eta _{IL_1 } P_{\ell LK_1 } } \right)e_J^{\rho _1 } \varepsilon ^{LK_1 L_1 M} e_M^{\sigma _1 } \\
&=  \frac{1}{2}eT^{IJ} \left( { - \eta _{IK_1 } P_{\ell LL_1 }  + \eta _{IL_1 } P_{\ell LK_1 } } \right)\varepsilon ^{LK_1 L_1 M} e_J^{[\rho _1 } e_M^{\sigma _1 ]}  \\
&=   \frac{1}{2}eT^{IJ} \left( { - \eta _{IK_1 } P_{\ell LL_1 }  + \eta _{IL_1 } P_{\ell LK_1 } } \right)\varepsilon ^{LK_1 L_1 M} e_{[J}^{\rho _1 } e_{M]}^{\sigma _1 }  \\ 
 & =   \frac{1}{2}eT^{IJ} \left( { - \eta _{IK_1 } P_{\ell LL_1 }  + \eta _{IL_1 } P_{\ell LK_1 } } \right)\varepsilon ^{LK_1 L_1 M} \left( {P_{iJM} \Sigma ^{i\rho _1 \sigma _1 }  + \bar P_{iJM} \bar \Sigma ^{i\rho _1 \sigma _1 } } \right) .
\end{split}
\end{equation*}
We use the selfdual property
\[
P_{IJ}^i \varepsilon ^{IJKL}  = -2iP^{iKL}, 
\]
to get
\[
2I_2=  -i eT^{IJ} \left( {\eta _{IK_1 } P_\ell ^{K_1 M}  + \eta _{IL_1 } P_\ell ^{L_1 M} } \right)\left( {P_{iJM} \Sigma ^{i\rho _1 \sigma _1 }  + \bar P_{iJM} \bar \Sigma ^{i\rho _1 \sigma _1 } } \right).
\]
And using
\[
\frac{1}{{2!}}e^{-1} \varepsilon^{\mu \nu \rho \sigma } \Sigma _{\rho \sigma }^i    =\left( {*\Sigma ^i } \right)^{\mu \nu }  = \left( { - i\Sigma ^i } \right)^{\mu \nu }  =  - i\Sigma ^{\mu \nu i }, \quad e=\sqrt{-g},
\]
we obtain
\[
I_2  = -\frac{1}{4}T^{IJ} \left( {\eta _{IK_1 } P_\ell ^{K_1 M}  + \eta _{IL_1 } P_\ell ^{L_1 M} } \right)\left( { - P_{iJM} \varepsilon ^{\rho _1 \sigma _1 \rho \sigma } \Sigma _{\rho \sigma }^i  + \bar P_{iJM} \varepsilon ^{\rho _1 \sigma _1 \rho \sigma } \bar\Sigma _{\rho \sigma }^i } \right).
\]
\\
We get the third term of (\ref{eq:61}) by the replacing $\sigma _1  \leftrightarrow \rho _1 $ in $I_2$ with reversing its sign, we obtain
\[
I_3  = +\frac{1}{4}T^{IJ} \left( {\eta _{IK_1 } P_\ell ^{K_1 M}  + \eta _{IL_1 } P_\ell ^{L_1 M} } \right)\left( { - P_{iJM} \varepsilon ^{ \sigma _1 \rho _1 \rho \sigma } \Sigma _{\rho \sigma }^i  + \bar P_{iJM} \varepsilon ^{ \sigma _1 \rho _1 \rho \sigma } \bar\Sigma _{\rho \sigma }^i } \right)=I_2,
\]
therefore
\[
I_2 + I_3  = +\frac{1}{2}T^{IJ} \left( {\eta _{IK_1 } P_\ell ^{K_1 M}  + \eta _{IL_1 } P_\ell ^{L_1 M} } \right)\left( { - P_{iJM} \varepsilon ^{ \sigma _1 \rho _1 \rho \sigma } \Sigma _{\rho \sigma }^i  + \bar P_{iJM} \varepsilon ^{ \sigma _1 \rho _1 \rho \sigma } \bar\Sigma _{\rho \sigma }^i } \right).
\]
\\
The fourth term of (\ref{eq:61}) is
\[
I_4 = \frac{i}{6}T^{\mu _1 \nu _1 }g_{\mu _1 \nu _1 }  {\varepsilon ^{\mu \nu \sigma _1 \rho _1 }  \Sigma _{\mu \nu \ell}} .
\]
Using $T^{\mu  \nu  } g_{\mu  \nu  }= T^{IJ } \eta_{IJ }=T$, the equation (\ref{eq:61}) becomes
\begin{equation*}
\begin{split}
& \left. {\frac{{\delta S_m }}{{\delta B_{\sigma _1 \rho _1 }^\ell  }}} \right|_{constriant}  = I_1  + I_2  + I_3  + I_4\\ 
 & = \frac{1}{4}T^{IJ} \left( {\eta _{LJ} P_{\ell IK }  + \eta _{IK } P_{\ell LJ} } \right)\left( {P_i^{LK } \varepsilon ^{\mu \nu \sigma _1 \rho _1 } \Sigma _{\mu \nu }^i  + \bar P_i^{LK } \varepsilon ^{\mu \nu \sigma _1 \rho _1 } \bar \Sigma _{\mu \nu }^i ,} \right) - \frac{1}{4} T \varepsilon ^{\mu \nu \sigma _1 \rho _1 } \Sigma _{\ell \mu \nu }  \\ 
 & + \frac{1}{2}T^{IJ} \left( {\eta _{IK } P_\ell ^{K M}  + \eta _{IL } P_\ell ^{L M} } \right)\left( { - P_{iJM} \varepsilon ^{ \sigma _1 \rho _1 \rho \sigma } \Sigma _{\rho \sigma }^i  + \bar P_{iJM} \varepsilon ^{ \sigma _1 \rho _1 \rho \sigma } \bar\Sigma _{\rho \sigma }^i } \right) \\ 
 & + \frac{i}{6}T \varepsilon ^{\mu \nu \sigma _1 \rho _1 } \Sigma _{\mu \nu \ell } .
\end{split}
\end{equation*}
\\
We write this for short as
\[
\left. {\frac{{\delta S_m }}{{\delta B_{\sigma _1 \rho _1 }^\ell  }}} \right|_{constriant}  = \varepsilon ^{\mu \nu \sigma _1 \rho _1 } T_{\ell i} \Sigma _{\mu \nu }^i  + \varepsilon ^{\mu \nu \sigma _1 \rho _1 } \xi _{\ell i} \bar \Sigma _{\mu \nu }^i ,
\]
or
\begin{equation}\label{eq:180}
\left. {\frac{{\delta S_m }}{{\delta B^i  }}} \right|_{constriant}  ={\frac{{\delta S_m }}{{\delta \Sigma^i  }}}=  T_{ij} \Sigma^j  + \xi_{ij} \bar \Sigma ^j ,
\end{equation}
the complex matrices $T$ and $\xi$ are given by 
\begin{equation}\label{eq:a13}
\begin{split}
 T_{ij}  &= \frac{1}{4}T^{IJ} \left( {\eta _{LJ} P_{i IK}  + \eta _{IK} P_{i LJ} } \right)P_j^{LK}  -\frac{1}{2}T^{IJ} \left( {\eta _{IK} P_i ^{KM}  + \eta _{IL} P_i ^{LM} } \right)P_{jJM} \\
&\quad \quad+ T\left( {\frac{i}{6} - \frac{1}{4}} \right)\delta _{ij}  \\ 
  &= -  \frac{1}{2}\left( P_{iIK} P_{jJ} {^K}\right)  T^{IJ} + \left( {\frac{i}{6} - \frac{1}{4}} \right)T\delta _{ij},  
\end{split}
\end{equation}
and
\begin{equation*}
\begin{split}
 \xi_{ij}  &= \frac{1}{4}T^{IJ} \left( {\eta _{LJ} P_{i IK}  + \eta _{IK} P_{i LJ} } \right)\bar P_j^{LK}  + \frac{1}{2}T^{IJ} \left( {\eta _{IK} P_i ^{KM}  + \eta _{IL} P_i ^{LM} } \right)\bar P_{jJM}  \\ 
  &= \frac{3}{2} \left(  P_{iIK} \bar P_{jJ} {^K} \right) T^{IJ}. 
\end{split}
\end{equation*}
The self-dual projection matrices $P_{ IJ}^i$ are given in the equation (\ref{eq:selfd}),
\begin{equation}\label{eq:a12}
P^i_{IJ}=\frac{1}{2}\varepsilon ^i{_{jk}}, \text{ for } I=i, J=j, \text{ and } P^i_{0j}=-P^i_{j0}=-\frac{i}{2}\delta^i_j, \text{ for } I=0, J=j \ne 0.
\end{equation}
and $\bar P_{ IJ}^i$ are their complex conjugate. We get
\begin{equation*}
\begin{split}
& \left( {P_{iIK} P_{jJ} ^K } \right)T^{IJ}  = \left( {P_{iI0} P_{jJ} ^0  + P_{iI\ell } P_{jJ} ^\ell  } \right)T^{IJ}  \\ 
&  = P_{i\ell 0} P_{jm} ^0 T^{\ell m}  + P_{i0\ell } P_{j0} ^\ell  T^{00}  + P_{im\ell } P_{j0} ^\ell  T^{m0}  + P_{i0\ell } P_{jm} ^\ell  T^{0m}  + P_{in\ell } P_{jm} ^\ell  T^{nm}  \\ 
&  =  - P_{i\ell 0} P_{jm0} T^{\ell m}  + P_{i0\ell } P_{j0} ^\ell  T^{00}  + P_{im\ell } P_{j0} ^\ell  T^{m0}  + P_{i0\ell } P_{jm} ^\ell  T^{0m}  + P_{in\ell } P_{jm} ^\ell  T^{nm}  \\
  &=  - \frac{i}{2}\delta _{i\ell } \frac{i}{2}\delta _{jm} T^{\ell m}  + \frac{{ - i}}{2}\delta _{i\ell } \frac{{ - i}}{2}\delta _j^\ell  T^{00}  + \frac{1}{2}\varepsilon _{im\ell } \frac{{ - i}}{2}\delta _j^\ell  T^{m0}  + \frac{{ - i}}{2}\delta _{i\ell } \frac{1}{2}\varepsilon _{jm} ^\ell  T^{0m}  + \frac{1}{2}\varepsilon _{in\ell } \frac{1}{2}\varepsilon _{jm} ^\ell  T^{nm}  \\ 
  &= \frac{1}{4}T_{ij}  - \frac{1}{4}\delta _{ij} T^{00}  + \frac{{ - i}}{4}\varepsilon _{imj} T^{m0}  + \frac{{ - i}}{4}\varepsilon _{jmi} T^{0m}  + \frac{1}{4}\left( {\delta _{ij} \delta _{nm}  - \delta _{im} \delta _{nj} } \right)T^{nm}   \\
&= \frac{1}{4}T_{ij}  - \frac{1}{4}\delta _{ij} T^{00}  + \frac{{ - i}}{4}\varepsilon _{imj} T^{0m}  + \frac{{ - i}}{4}\varepsilon _{jmi} T^{0m}  + \frac{1}{4}\delta _{ij} \delta _{nm} T^{nm}  - \frac{1}{4}T_{ji}  \\ 
&  =  - \frac{1}{4}\delta _{ij} T^{00}  + \frac{1}{4}\delta _{ij} \delta _{nm} T^{nm}  = \frac{1}{4}\delta _{ij} \eta _{00} T^{00}  + \frac{1}{4}\delta _{ij} \eta _{nm} T^{nm}  = \frac{1}{4}\delta _{ij} \eta _{IJ} T^{IJ}  = \frac{1}{4}\delta _{ij} T. 
\end{split}
\end{equation*}
Using this in the equation (\ref{eq:a13}), we get
\begin{equation}\label{eq:a14}
T_{ij}= - \frac{1}{8}\delta _{ij} T + \left( {\frac{i}{6} - \frac{1}{4}} \right)T\delta _{ij}=\left( {\frac{i}{6} - \frac{3}{8}} \right)T\delta _{ij}. 
\end{equation}

\section{Appendix C}
$\rm{I}$- We verify that the (0,2) tensor field $\Sigma ^{\mu \nu i}$ defined in
\[
 - i\Sigma ^{\mu \nu i }=\frac{1}{{2!}}e^{-1} \varepsilon^{\mu \nu \rho \sigma } \Sigma _{\rho \sigma }^i , \quad  \text{for}\quad \varepsilon^{0123}=-\varepsilon_{0123}=1, \quad e=\det(e_\mu^I) ,
\]
is inversion of the 2-form $\Sigma_{\mu \nu }^i$, that is $\Sigma _{\mu \nu }^i  \Sigma ^{\mu \nu }_j=\delta^i_j$. Multiplying with $\Sigma _{\mu \nu j} $ and sum over contracted indices, we get
\[
 - i\Sigma _{\mu \nu j} \Sigma ^{\mu \nu i }=\frac{1}{{2}}e^{-1} \varepsilon^{\mu \nu \rho \sigma }\Sigma _{\mu \nu j} \Sigma _{\rho \sigma }^i .
\]
Then using $\Sigma^i_{\mu \nu }=P_{IJ}^i e^I_\mu e^J_\nu $, implies
\[
 - i\Sigma _{\mu \nu j} \Sigma ^{\mu \nu i }=\frac{1}{{2}}e^{-1} \varepsilon^{\mu \nu \rho \sigma }P_{jIJ} P^i_{KL} e^I_\mu e^J_\nu e^K_\rho e^L_\sigma.
\]
Using $\varepsilon^{\mu \nu \rho \sigma } e^I_\mu e^J_\nu e^K_\rho e^L_\sigma=e \varepsilon^{IJKL }$, to get
\[
 - i\Sigma _{\mu \nu j} \Sigma ^{\mu \nu i }=\frac{1}{{2}}e^{-1}P_{jIJ} P^i_{KL} e \varepsilon^{IJKL }=\frac{1}{{2}}P_{jIJ} P^i_{KL} \varepsilon^{IJKL }.
\]
Then we use the self-dual projection property
\[
P_{IJ}^i \varepsilon ^{IJKL}  = -2iP^{iKL}
\]
to obtain
\[
 - i\Sigma _{\mu \nu j} \Sigma ^{\mu \nu i }=\frac{1}{{2}}P_{jIJ}(-2iP^{iIJ}) =P_{jIJ}(-iP^{iIJ }).
\]
Therefore
\[
 \Sigma _{\mu \nu j} \Sigma ^{\mu \nu i } =P_{jIJ} P^{iIJ}=\delta^i_j,
\]
where we used the self-dual projection property $P_{jIJ} P^{iIJ}=\delta^i_j$. The sum is over the contracted indices.
\\

$\rm{II}$- The local Lorentz invariance gives a conserved current, call it spin current, or Lorentz current. In flat coordinates $(x^I)$, the spin current for arbitrary field $\varphi$ is (\cite{Mark}, section 22)
\begin{equation}\label{eq:z71}
M^{IJK}=x^J T^{IK} - x^K T^{IJ},
\end{equation}
the conservation law is $\partial_I M^{IJK}=0$, it follows from the conservation of energy-momentum tensor $\partial_I T^{IJ}=0$ of the field $\varphi$. When the field carries spinor indices, like $\varphi^\alpha$, this adds a term like
\[
M^{IJK}=x^J T^{IK} - x^K T^{IJ} + \pi_\alpha^I (S^{JK})^\alpha{_\beta} \varphi^\beta.
\]
In arbitrary coordinates $(x^\mu)$, we write $x^I=e^I_\mu x^\mu$, where $x^I$ becomes tangent vector and $e^I_\mu$ are gravitational fields, also $T^{\mu\nu}=e_I^\mu e_J^\nu T^{IJ}$. One can write
\begin{equation}\label{eq:z72}
M^{\mu IJ}=e^I_\nu x^\nu T^{\mu J} - e^J_\nu x^\nu T^{\mu I} .
\end{equation}
Therefore $D_\mu M^{\mu IJ}=0$ is satisfied when $D_\mu e^I_\nu=0$ and $D_\mu T^{\mu I}=0$. Here $e^I_\mu$ and $T^{\mu I}$ are functions of the coordinates $(x^\mu)$, so also $M^{\mu IJ}$ are functions of $(x^\mu)$. The current $M^{\mu IJ}$ couples to the spin connection $\omega^{IJ}_\mu$ for local symmetry of Lorentz group $SO(3,1)$. The equation $D_\mu e^I_\nu=0$ defines affine connection $\nabla$ on $TM\times TM$ and spin connection $\omega$ on $TM\times \mathfrak{so}(3, 1)$. When the coordinates $(x^\mu)$ are flat, so $e^I_\mu=\delta^I_\mu$, by that the equation (\ref{eq:z72}) becomes (\ref{eq:z71}).

\end{document}